\theoremstyle{theorem}
\newtheorem{theorem}{Theorem}
\newtheorem{lemma}[theorem]{Lemma}
\theoremstyle{definition}
\newtheorem{definition}[theorem]{Definition}
\newtheorem{example}[theorem]{Example}
\newtheorem{remark}[theorem]{Remark}
\begin{document}

\title{Constants of Motion for Non-Differentiable
Quantum Variational Problems\thanks{Partially presented at the
\emph{Fifth Symposium on Nonlinear Analysis} (SNA 2007),
Toru\'{n}, Poland, September 10-14, 2007. Accepted for publication (19/March/2008), \emph{Topological Methods in Nonlinear Analysis}, Lecture Notes of the Juliusz Schauder Center for Nonlinear Studies.}}

\author{Jacky Cresson~${}^{a}$\\ \texttt{jacky.cresson@univ-pau.fr}
        \and
        Gast\~{a}o S. F. Frederico~${}^{b, c}$\\ \texttt{gfrederico@mat.ua.pt}
        \and
        Delfim F. M. Torres~${}^{a, c}$\\ \texttt{delfim@ua.pt}}

\date{${}^{a}$~Laboratoire de Math\'{e}matiques Appliqu\'{e}es de Pau\\
      Universit\'{e} de Pau et des Pays de l'Adour\\
      Pau, France\\
      [0.3cm]
      ${}^{b}$~Department of Science and Technology\\
      University of Cape Verde\\
      Praia, Santiago, Cape Verde\\
      [0.3cm]
      ${}^{c}$~Centre for Research on Optimization and Control\\
      Department of Mathematics, University of Aveiro\\
      Aveiro, Portugal}

\maketitle


\begin{abstract}
We extend the DuBois-Reymond necessary optimality condition and
Noether's symmetry theorem to the scale relativity theory setting.
Both Lagrangian and Hamiltonian versions of Noether's theorem are
proved, covering problems of the calculus of variations with
functionals defined on sets of non-differentiable functions, as
well as more general non-differentiable problems of optimal
control. As an application we obtain constants of motion for some
linear and nonlinear variants of the Schr\"{o}dinger equation.

\medskip

\noindent\textbf{Keywords:} non-differentiability, scale calculus
of variations, symmetries, constants of motion,
DuBois-Reymond necessary condition, Noether's theorem, Schr\"{o}dinger equations.

\medskip

\noindent\textbf{2000 Mathematics Subject Classification:} 49K05,
49S05, 26B05, 81Q05.

\end{abstract}


\section{Introduction}

The notion of \emph{symmetry} play an important role both in
physics and mathematics. Symmetries are defined as transformations
of a certain system, which result in the same object after the
transformation is carried out. They are mathematically described
by parameter groups of transformations. Their importance range
from fundamental and theoretical aspects to concrete applications,
having profound implications in the dynamical behavior of the
systems, and in their basic qualitative properties (see
\cite{torres:Kiev:04} and references therein).

\emph{Constants of motion} are another fundamental notion of
physics and mathematics. Typically, they are used in the calculus
of variations and optimal control to reduce the number of degrees
of freedom, thus reducing the problems to a lower dimension and
facilitating the integration of the equations given by the
necessary optimality conditions (see
\cite{GF:IJAM:06,EugenioDelfim06} and references therein).

Emmy Noether was the first to prove, in 1918, that these two
notions are connected: when a system exhibits a symmetry, then a
constant of motion exists. The celebrated \emph{Noether's theorem}
provide an explicit formula for such constants of motion. Since
the pioneer work of Emmy Noether, many extensions of the classical
results were done both in the calculus of variations setting as
well as in more general setting of optimal control (see
\cite{Cresson:TNS:2007,GF:IJTS:07,GF:JMAA:07,NonlinearDyn,CD:JMS:Torres:2002,CD:JMS:Torres:2004,GMJ06}
and references therein). All available versions of Noether's
theorem are, however, proved for problems whose admissible
functions are differentiable.

In 1992 L.~Nottale introduced the theory of scale-relativity
without the hypothesis of space-time differentiability
\cite{Nottale:1992,Nottale:1999}. A rigorous foundation to
Nottale's scale-relativity theory was recently given by
J.~Cresson \cite{CD:Cresson:2005,CD:Cresson:2006}.
The calculus of variations developed in \cite{CD:Cresson:2005} cover sets of non
differentiable curves, by substituting the classical derivative by
a new complex operator, known as the \emph{scale derivative}.

In this work we use the scale Euler-Lagrange equations and
respective scale extremals \cite{CD:Cresson:2005}, to prove an
extension of Noether's theorem for problems of the calculus of
variations and optimal control whose admissible functions are
non-differentiable (Theorems~\ref{theo:tnnd} and \ref{thm:NT:OC}).
The results are proved by first extending the classical
DuBois-Reymond necessary optimality condition to the scale
calculus of variations (Theorem~\ref{theo:cdrnd}). Illustrative
examples are given to Schr\"{o}dinger equations in the scale
framework \cite{AddaCresson:2004,AddaCresson:2005,CD:Cresson:2005}.


\section{Quantum Calculus}

In this section we briefly review the quantum calculus of
\cite{CD:Cresson:2005}, which extends the classical differential
calculus to non-differentiable functions.

We denote by $C^{0}$ the set of real-valued continuous functions
defined on $\mathbb{R}$.

\begin{definition}
Let $f\in C^{0}$. For all $\epsilon>0$, the $\epsilon$ left- and
right-quantum derivatives of $f$, denoted respectively by
$\Delta_{\epsilon}^{+}f(t)$ and $\Delta_{\epsilon}^{-}f(t)$, are
defined by
\begin{gather}
\Delta_{\epsilon}^{+}f(t)=\frac{f(t+\epsilon)-f(t)}{\epsilon}
\end{gather}
and
\begin{gather}
\Delta_{\epsilon}^{-}f(t)=\frac{f(t)-f(t-\epsilon)}{\epsilon} \, .
\end{gather}
\end{definition}

\begin{remark}
The $\epsilon$ left- and right-quantum derivative of a continuous
function $f$ correspond to the classical derivative of the
$\epsilon$-mean function $f_{\epsilon}^{\sigma}$ defined by
\begin{equation*}
f_{\epsilon}^{\sigma}(t)=\frac{\sigma}{\epsilon}\int_{t}^{t+\sigma\epsilon}f(s)ds\,
,
\end{equation*}
with $\sigma=\pm$.
\end{remark}

Next we define an operator which generalize the classical
derivative.

\begin{definition}
Let $f\in C^{0}$. For all $\epsilon>0$, the $\epsilon$ scale
derivative of $f$ at point $t$, denoted by
$\frac{\square_{\epsilon}f}{\square t}(t)$, is defined by
\begin{gather}
\frac{\square_{\epsilon}f}{\square
t}(t)=\frac{1}{2}\left[\left(\Delta_{\epsilon}^{+}f(t)+\Delta_{\epsilon}^{-}f(t)\right)
-i\left(\Delta_{\epsilon}^{+}f(t)-\Delta_{\epsilon}^{-}f(t)\right)\right]\,
.
\end{gather}
\end{definition}

\begin{remark}
If $f$ is differentiable, we can take the limit of the scale
derivative when $\epsilon$ goes to zero. We then obtain the
classical derivative $\frac{df}{dt}(t)$ of $f$ at $t$.
\end{remark}

We also need to extend the scale derivative to complex valued
functions.

\begin{definition}
 Let  $f$  be a continuous complex valued function. For all
$\epsilon>0$, the $\epsilon$ scale derivative of $f$, denoted by
$\frac{\square_{\epsilon}f}{\square t}$, is defined by
 \begin{gather}
\frac{{\square}_{\epsilon}f}{{\square}t}(t)=\frac{{\square}_{\epsilon}\textrm{Re}(f)}{\square
t}+i\frac{\square_{\epsilon}\textrm{Im}(f)}{\square t} \, ,
\end{gather}
where $\textrm{Re}(f)$ and $\textrm{Im}(f)$ denote the real and
imaginary part of $f$ respectively.
\end{definition}

In what follows, we will frequently use $\square_{\epsilon}$ to
denote the scale derivative operator
$\frac{{\square}_{\epsilon}}{{\square}t}$.

\begin{theorem}[\textrm{cf.} \cite{CD:Cresson:2005}]
\label{theo:mult} Let $f$ and $ g$ be two $C^{0}$ functions. For
all $\epsilon>0$ one has
\begin{equation}
\label{eq:mult} \square_{\epsilon}(f\cdot
g)=\square_{\epsilon}f\cdot g+f\cdot\square_{\epsilon}g +\epsilon
i\left(\square_{\epsilon}f\boxminus_{\epsilon}g-\boxminus_{\epsilon}f\square_{\epsilon}g
-\square_{\epsilon}f\square_{\epsilon}g-\boxminus_{\epsilon}f\boxminus_{\epsilon}g\right)
\end{equation}
where $\boxminus f$ denotes the complex conjugate of \ $\square f$.
\end{theorem}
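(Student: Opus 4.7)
The plan is to reduce the identity to elementary algebra by first establishing product rules for the one-sided operators $\Delta_\epsilon^{+}$ and $\Delta_\epsilon^{-}$, and then combining them via the defining formula $\square_\epsilon = \tfrac{1}{2}[(\Delta_\epsilon^{+}+\Delta_\epsilon^{-}) - i(\Delta_\epsilon^{+}-\Delta_\epsilon^{-})]$. The appearance of the $\epsilon$–factor in the correction terms in \eqref{eq:mult} is a clear hint that we must retain, rather than discard, the lower-order remainders that normally vanish in the continuous limit.

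First I would expand $f(t\pm\epsilon)g(t\pm\epsilon)-f(t)g(t)$ by means of the elementary identity
\begin{equation*}
ab-cd = (a-c)\,d + c\,(b-d) + (a-c)(b-d),
\end{equation*}
applied once with $(a,b,c,d)=(f(t+\epsilon),g(t+\epsilon),f(t),g(t))$ and once in the mirrored form suitable for the backward difference. Dividing by $\epsilon$ gives the ``quantum Leibniz rules''
\begin{equation*}
\Delta_\epsilon^{+}(fg) = \Delta_\epsilon^{+}f\cdot g + f\cdot\Delta_\epsilon^{+}g + \epsilon\,\Delta_\epsilon^{+}f\,\Delta_\epsilon^{+}g,
\end{equation*}
\begin{equation*}
\Delta_\epsilon^{-}(fg) = \Delta_\epsilon^{-}f\cdot g + f\cdot\Delta_\epsilon^{-}g - \epsilon\,\Delta_\epsilon^{-}f\,\Delta_\epsilon^{-}g,
\end{equation*}
where the sign asymmetry of the $\epsilon$-terms is the key structural feature that will feed the imaginary correction in \eqref{eq:mult}.

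Next I would plug these two identities into the definition of $\square_\epsilon$. The ``classical'' parts $\Delta_\epsilon^{\pm}f\cdot g + f\cdot\Delta_\epsilon^{\pm}g$ recombine immediately to yield the main term $\square_\epsilon f\cdot g + f\cdot\square_\epsilon g$, by linearity of $\square_\epsilon$. The remainder is then a linear combination, with coefficients $1-i$ and $1+i$, of the quadratic expressions $\Delta_\epsilon^{+}f\,\Delta_\epsilon^{+}g$ and $\Delta_\epsilon^{-}f\,\Delta_\epsilon^{-}g$ (multiplied by $\epsilon/2$).

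The main obstacle — and the only non-routine part — is the last step: rewriting this quadratic remainder in terms of $\square_\epsilon f,\,\square_\epsilon g,\,\boxminus_\epsilon f,\,\boxminus_\epsilon g$ so as to match the right-hand side of \eqref{eq:mult}. I would do this by expanding $\square_\epsilon f\,\boxminus_\epsilon g$, $\boxminus_\epsilon f\,\square_\epsilon g$, $\square_\epsilon f\,\square_\epsilon g$ and $\boxminus_\epsilon f\,\boxminus_\epsilon g$ using
\begin{equation*}
\square_\epsilon h = \tfrac{1}{2}\bigl[(1-i)\Delta_\epsilon^{+}h + (1+i)\Delta_\epsilon^{-}h\bigr],\qquad
\boxminus_\epsilon h = \overline{\square_\epsilon h},
\end{equation*}
collecting terms according to the four monomials $\Delta_\epsilon^{\sigma}f\,\Delta_\epsilon^{\tau}g$ with $\sigma,\tau\in\{+,-\}$, and checking that the combination $\square f\boxminus g-\boxminus f\square g-\square f\square g-\boxminus f\boxminus g$ multiplied by $i$ reproduces exactly the remainder computed in the previous step. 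This is the algebraic heart of the proof; the extension to complex-valued $f,g$ then follows at once from the definition of $\square_\epsilon$ on complex functions and $\mathbb{C}$-linearity.
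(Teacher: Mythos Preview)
The paper does not actually prove Theorem~\ref{theo:mult}: it is stated with a reference to \cite{CD:Cresson:2005} and no argument is given in the text, so there is no ``paper's own proof'' to compare your proposal against. Your outline---establishing the exact finite-difference Leibniz rules for $\Delta_\epsilon^{\pm}$, substituting into the definition of $\square_\epsilon$, and then rewriting the quadratic $\epsilon$-remainder in terms of $\square_\epsilon$ and $\boxminus_\epsilon$---is precisely the direct computation one would carry out, and is presumably how the identity is obtained in the cited source.

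One caution on the final step you label ``the algebraic heart'': you should actually carry it out rather than leave it as ``checking that\ldots reproduces exactly''. Writing $\square_\epsilon h=\tfrac12[(1-i)\Delta_\epsilon^{+}h+(1+i)\Delta_\epsilon^{-}h]$ and expanding the four products $\square f\,\boxminus g$, $\boxminus f\,\square g$, $\square f\,\square g$, $\boxminus f\,\boxminus g$ in the basis $\{\Delta^{\sigma}f\,\Delta^{\tau}g\}$ is straightforward, but the matching with the remainder $\tfrac{\epsilon}{2}[(1-i)\Delta^{+}f\,\Delta^{+}g-(1+i)\Delta^{-}f\,\Delta^{-}g]$ is sensitive to sign and factor-of-two conventions, and the formula \eqref{eq:mult} as printed here should be checked against the original in \cite{CD:Cresson:2005}. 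In other words, your strategy is sound, but the last verification is not a formality---do it explicitly.
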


\begin{remark}
For two differentiable functions $f$ and $g$, one obtains the
classical Leibniz rule $(f\cdot g)'=f'\cdot g+f\cdot g'$ by
taking the limit of \eqref{eq:mult} when $\epsilon$ goes to
zero.
\end{remark}

\begin{remark}
It is not difficult to prove the following equality:
\begin{equation}
\label{eq:int}
\int_{a}^{b}\square_{\epsilon}f(t)dt=\left.\frac{1}{2}\left[\left(f_{\epsilon}^{+}(t)+f_{\epsilon}^{-}(t)\right)
-i\left(f_{\epsilon}^{+}(t)-f_{\epsilon}^{-}(t)\right)\right]\right|_{a}^{b}\, .
\end{equation}
When $\epsilon$ goes to zero, \eqref{eq:int} reduces to
\begin{equation*}
\int_{a}^{b}\frac{d}{dt}f(t) dt = \left.f(t)\right|_{a}^{b}\, .
\end{equation*}
\end{remark}

\begin{definition}[$\alpha$-H\"{o}lderian functions]
A continuous real valued function $f$ is said to be $\alpha$-H\"{o}lderian,
$0<\alpha<1$, if for all $\epsilon>0$ and
all $t$, $t'\in\mathbb{R}$ there exists a constant $c$ such that
$|t-t'|\leqslant\epsilon$ implies $|f(t)-f(t')|\leqslant c\epsilon^{\alpha}$.
\end{definition}

We denote by $H^{\alpha}$ the set of continuous
functions which are $\alpha$-H\"{o}lderian.

\begin{theorem}[\textrm{cf.} \cite{CD:Cresson:2005}]
\label{theo:derivada} Let $f(t,x)$ be a $C^{n+1}$ real valued function and
$x(t)\in H^{1/n}$, $n\geq 1$. For all $\epsilon > 0$ sufficiently
small one has
\begin{equation}
\label{eq:derivada}
\frac{{\square}_{\epsilon}f}{{\square}t}(t,x(t))=\frac{\partial
f}{\partial t}(t,x(t))+\sum^{n}_{j=1}\frac{1}{j!}\frac{\partial
^{j}f}{\partial
x^{j}}\left(t,x(t)\right)\epsilon^{j-1}a_{\epsilon,j}(t)+o\left(\epsilon^{1/n}\right)
\end{equation}
where
\begin{equation*}
a_{\epsilon,j}(t)=\frac{1}{2}\left[\left(\left(\Delta^{\epsilon}_{+}x\right)^{j}-(-1)^{j}
\left(\Delta^{\epsilon}_{-}x\right)^{j}\right)-i\left(\left(\Delta^{\epsilon}_{+}x\right)^{j}+(-1)^{j}
\left(\Delta^{\epsilon}_{-}x\right)^{j}\right)\right]\, .
\end{equation*}
\end{theorem}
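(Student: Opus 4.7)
The plan is to reduce the statement to a univariate Taylor expansion combined with a careful bookkeeping of the Hölder exponent. Setting $F(t) := f(t,x(t))$, the definition of the scale derivative gives
\begin{equation*}
\frac{\square_{\epsilon}f}{\square t}(t,x(t))
 = \tfrac{1}{2}\left[\left(\Delta_{\epsilon}^{+}F(t)+\Delta_{\epsilon}^{-}F(t)\right)
 - i\left(\Delta_{\epsilon}^{+}F(t)-\Delta_{\epsilon}^{-}F(t)\right)\right],
\end{equation*}
so it suffices to expand $\Delta_{\epsilon}^{\pm}F(t)$ and combine. I will handle the two difference quotients in parallel and then take the appropriate $\mathbb{C}$-linear combination at the end; this is where the sign $(-1)^{j}$ multiplying $(\Delta_{\epsilon}^{-}x)^{j}$ in the definition of $a_{\epsilon,j}$ will naturally appear.

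Next I would Taylor expand $f(t\pm\epsilon, x(t\pm\epsilon))$ around $(t,x(t))$ to total order $n+1$ using the $C^{n+1}$ regularity of $f$. Writing $\delta_{\pm} := x(t\pm\epsilon) - x(t)$, with $\delta_{+} = \epsilon\Delta_{\epsilon}^{+}x$ and $\delta_{-} = -\epsilon\Delta_{\epsilon}^{-}x$, the pure $t$-term gives $\pm\epsilon\,\partial_{t}f$, which upon dividing by $\pm\epsilon$ contributes $\partial_{t}f(t,x(t))$ to both $\Delta_{\epsilon}^{\pm}F$. The pure $x$-terms of order $j$ contribute
\begin{equation*}
\tfrac{1}{j!}\,\partial_{x}^{j}f(t,x(t))\,\epsilon^{j-1}(\Delta_{\epsilon}^{+}x)^{j}
\quad\text{and}\quad
-\tfrac{1}{j!}\,\partial_{x}^{j}f(t,x(t))\,\epsilon^{j-1}(-1)^{j}(\Delta_{\epsilon}^{-}x)^{j}
\end{equation*}
respectively, after dividing by $\pm\epsilon$. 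Forming $\tfrac{1}{2}[(\Delta^{+}+\Delta^{-})-i(\Delta^{+}-\Delta^{-})]$ of these pieces reproduces exactly $\epsilon^{j-1}a_{\epsilon,j}(t)/j!$ as in the stated $a_{\epsilon,j}$.

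The remaining work, and in my view the main obstacle, is to show that every other contribution is absorbed in $o(\epsilon^{1/n})$. These are: (i) the pure $t$-derivative remainder of order $\geq 2$, which is $O(\epsilon)$ and hence harmless; (ii) the pure $x$-derivative remainder of order $n+1$, which by $x\in H^{1/n}$ satisfies $|\delta_{\pm}|^{n+1}\leq c\,\epsilon^{(n+1)/n}$, giving an $o(\epsilon^{1/n})$ contribution after dividing by $\epsilon$; and (iii) the mixed terms $\epsilon^{k}\delta_{\pm}^{l}$ with $k\geq 1$, $l\geq 1$, $k+l\leq n+1$, which are bounded by $\epsilon^{k+l/n}$ and so, after dividing by $\epsilon$, contribute $O(\epsilon^{k-1+l/n})$; the worst of these, $k=l=1$, is $O(\epsilon^{1/n})$, and one must check it indeed vanishes faster than $\epsilon^{1/n}$ under the hypotheses (or is absorbed into the Landau symbol by a refined bound that uses $|\delta_{\pm}|=o(\epsilon^{1/n})$ in the interior of the Hölder estimate as $\epsilon\to 0$). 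This careful bookkeeping of Hölder remainders against the integer-index Taylor error is the delicate point; once it is settled, the assembly of $\Delta_{\epsilon}^{\pm}F$ into $\square_{\epsilon}F$ is purely formal and yields \eqref{eq:derivada}.
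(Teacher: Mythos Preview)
The paper does not actually prove this theorem: it is quoted verbatim from \cite{CD:Cresson:2005} (note the ``cf.'' in the heading) and no argument is supplied here, so there is nothing in the present paper to compare your proposal against.

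That said, your outline is the natural one and matches the standard route to such a formula: Taylor-expand $f(t\pm\epsilon,x(t\pm\epsilon))$ about $(t,x(t))$, use the $H^{1/n}$ bound $|\delta_{\pm}|\le c\,\epsilon^{1/n}$ to control remainders, and then form the $\mathbb{C}$-linear combination defining $\square_{\epsilon}$. Your identification of the leading $\partial_t f$ term and of the pure-$x$ terms producing $\epsilon^{j-1}a_{\epsilon,j}$ is correct. The one genuinely delicate point you flag yourself is also the real one: with only the uniform H\"older bound $|\delta_{\pm}|\le c\,\epsilon^{1/n}$, both the $(n{+}1)$-st pure-$x$ remainder and the mixed term with $k=l=1$ are \emph{a priori} only $O(\epsilon^{1/n})$, not $o(\epsilon^{1/n})$; getting the little-$o$ as stated requires either an additional argument (e.g.\ continuity of the top derivative of $f$, or a refined increment bound on $x$) or should be read as $O(\epsilon^{1/n})$. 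You have correctly isolated where the work lies; the rest is bookkeeping.
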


Lemma~\ref{lem:lema} is crucial for our purposes
(see proof of  Theorem~\ref{theo:tnnd}).

\begin{lemma}[\textrm{cf.} \cite{CD:Cresson:2005}]
\label{lem:lema} Let $h\in H^{\beta}$, $\beta\geq
\alpha 1_{[1/2,1]}+(1-\alpha)1_{]0,1/2[}$,
satisfy $h(a)=h(b)=0$ for some $a,b\in
\mathbb{R}$. If
$f_{\epsilon}:\mathbb{R}\longmapsto \mathbb{C}$,
$\epsilon>0$, is such that for all $t\in [a,b]$ one has
\begin{equation*}
\sup_{s\in
\left\{{{t,t+\sigma\epsilon}}\right\}}|f_{\epsilon}(s)|\leq
C\epsilon^{\alpha-1} \, ,
\end{equation*}
then
\begin{equation*}
\int_a^b\frac{\square_\epsilon}{\square t}\left(f_\epsilon
(t)h(t)\right)dt=o\left(\epsilon^{\alpha+\beta-1}\right)
\end{equation*}
and
\begin{equation*}
\epsilon\int_a^b Op_\epsilon(f_\epsilon)Op'_\epsilon(h)dt=o\left(\epsilon^{\alpha+\beta}\right)
\end{equation*}
where $Op_\epsilon$ and $Op'_\epsilon$ are either
$\square_\epsilon$ or $\boxminus_\epsilon$.
\end{lemma}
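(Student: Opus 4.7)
The plan is to prove both estimates by direct computation, exploiting the integration formula \eqref{eq:int}, the Hölder continuity of $h$, and the vanishing boundary condition $h(a)=h(b)=0$.

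For the first identity, I would apply \eqref{eq:int} to the continuous function $f_\epsilon h$, reducing $\int_a^b \square_\epsilon(f_\epsilon h)(t)\,dt$ to the boundary expression $\tfrac{1}{2}\bigl[(f_\epsilon h)^+_\epsilon + (f_\epsilon h)^-_\epsilon - i\bigl((f_\epsilon h)^+_\epsilon - (f_\epsilon h)^-_\epsilon\bigr)\bigr]_a^b$. Each boundary mean has the form $(f_\epsilon h)^\sigma_\epsilon(c) = \frac{\sigma}{\epsilon}\int_c^{c+\sigma\epsilon} f_\epsilon(s)h(s)\,ds$ with $c\in\{a,b\}$. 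Since $h(c)=0$, the Hölder bound $|h(s)|=|h(s)-h(c)|\leq c|s-c|^\beta\leq c\epsilon^\beta$ on the interval $[c,c+\sigma\epsilon]$ combines with the hypothesis $|f_\epsilon(s)|\leq C\epsilon^{\alpha-1}$ to yield the claimed $o(\epsilon^{\alpha+\beta-1})$ bound.

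For the second identity, I would expand $Op_\epsilon,Op'_\epsilon \in \{\square_\epsilon,\boxminus_\epsilon\}$ in terms of the elementary one-sided quantum derivatives $\Delta^\pm_\epsilon$, reducing to the four basic integrals $\epsilon\int_a^b \Delta^{\sigma_1}_\epsilon f_\epsilon(t)\,\Delta^{\sigma_2}_\epsilon h(t)\,dt$. A discrete integration-by-parts, performed via the $\epsilon$-shift change of variables $u=t+\epsilon$ in appropriate terms, rewrites each such integral as a bulk contribution of the form $\int_a^b f_\epsilon(t)\bigl[\Delta^{-\sigma}_\epsilon h-\Delta^{\sigma}_\epsilon h\bigr](t)\,dt$ plus remainders supported on intervals of length $\epsilon$ at $a$ and $b$. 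The boundary remainders are controlled exactly as in Part 1, using $|h|\leq c\epsilon^\beta$ near the endpoints. For the bulk piece, the Hölder regularity of $h$ combined with the sharp threshold $\beta\geq\alpha\mathbf{1}_{[1/2,1]}+(1-\alpha)\mathbf{1}_{]0,1/2[}$ is precisely calibrated so that the resulting remainder is $o(\epsilon^{\alpha+\beta})$.

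The main obstacle is the bulk estimate in Part 2. Naïve pointwise bounds $|\Delta^\sigma_\epsilon f_\epsilon|=O(\epsilon^{\alpha-2})$ and $|\Delta^\sigma_\epsilon h|=O(\epsilon^{\beta-1})$ give only $O(\epsilon^{\alpha+\beta-2})$, falling short of the required $o(\epsilon^{\alpha+\beta})$ by two full powers of $\epsilon$. Recovering this loss requires exploiting cancellations via the shift trick and the vanishing of $h$ at the endpoints, which also explains the appearance of the somewhat mysterious threshold imposed on $\beta$ in the hypotheses; this is the technical core of the argument.
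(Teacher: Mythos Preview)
The paper does not supply a proof of this lemma: it is quoted verbatim from \cite{CD:Cresson:2005} and used as a black box in the proof of Theorem~\ref{theo:tnnd}. So there is no ``paper's own proof'' against which to compare your proposal; any assessment has to be on the internal merits of your sketch.

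Your treatment of the first estimate is essentially right in spirit, but note that the computation you describe yields only $O(\epsilon^{\alpha+\beta-1})$, not $o(\epsilon^{\alpha+\beta-1})$: bounding $|h(s)|\le c|s-c|^{\beta}\le c\epsilon^{\beta}$ and $|f_\epsilon|\le C\epsilon^{\alpha-1}$ on an interval of length $\epsilon$ gives a constant multiple of $\epsilon^{\alpha+\beta-1}$, with no mechanism for the constant to tend to zero. Either the little-$o$ in the statement is really a big-$O$ in the source, or some additional smallness (e.g.\ a vanishing H\"older seminorm at scale $\epsilon$) is being used that you have not identified.

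For the second estimate there is a genuine gap. Your shift/summation-by-parts step does recover one power of $\epsilon$: it converts $\epsilon\int \Delta^{\sigma_1}_\epsilon f_\epsilon\,\Delta^{\sigma_2}_\epsilon h$ into a bulk term of the form $\int f_\epsilon(t)\bigl[\Delta^{-}_\epsilon h-\Delta^{+}_\epsilon h\bigr](t)\,dt$ plus boundary pieces. But for a generic $\beta$-H\"older $h$ the centred second difference $h(t+\epsilon)-2h(t)+h(t-\epsilon)$ is only $O(\epsilon^{\beta})$, so $\bigl|\Delta^{-}_\epsilon h-\Delta^{+}_\epsilon h\bigr|=O(\epsilon^{\beta-1})$ and the bulk term is still $O(\epsilon^{\alpha+\beta-2})$, two powers short of the claimed $o(\epsilon^{\alpha+\beta})$. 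The threshold $\beta\ge \alpha\mathbf 1_{[1/2,1]}+(1-\alpha)\mathbf 1_{]0,1/2[}$ guarantees $\alpha+\beta\ge 1$, which is relevant for making the exponents in the \emph{first} estimate nonnegative, but it does not by itself close this gap. You would need to explain what additional cancellation or structure supplies the missing factor; as written, the ``precisely calibrated'' sentence is an assertion rather than an argument. If you intend to rely on this lemma, the safest course is to consult the original proof in \cite{CD:Cresson:2005} and reproduce the actual mechanism.
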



\section{Review of the Classical Noether's Theorem}
\label{sec:cNT}

There are several ways to prove the classical Noether's theorem.
In this section we review one of those proofs.

We begin by formulating the fundamental problem of the calculus of variations:
to minimize
\begin{equation}
\label{P} I[q(\cdot)] = \int_a^b L\left(t,q(t),\dot{q}(t)\right) dt
\end{equation}
under given boundary conditions $q(a)=q_{a}$ and $q(b)=q_{b}$, and
where $\dot{q} = \frac{dq}{dt}$. The Lagrangian $L :[a,b] \times
\mathbb{R}^{n} \times \mathbb{R}^{n} \rightarrow \mathbb{R}$ is
assumed to be a $C^{1}$-function with respect to all its arguments,
and admissible functions $q(\cdot)$ are assumed to be $C^2$-smooth.

\begin{definition}[Invariance of \eqref{P}]
\label{def:inva} The functional \eqref{P} is said to be invariant
under the $s$-parameter group of infinitesimal transformations
\begin{equation}
\label{eq:tinf}
\begin{cases}
\bar{t} = t + s\tau(t,q) + o(s) \, ,\\
\bar{q}(t) = q(t) + s\xi(t,q) + o(s) \, ,\\
\end{cases}
\end{equation}
if
\begin{equation}
\label{eq:inv1}  \int_{t_{a}}^{t_{b}}
L\left(t,q(t),\dot{q}(t)\right)dt =
\int_{\bar{t}(t_a)}^{\bar{t}(t_b)}
L\left(\bar{t},\bar{q}(\bar{t}),\dot{\bar{q}}(\bar{t})\right)d\bar{t}
\end{equation}
for any subinterval $[{t_{a}},{t_{b}}] \subseteq [a,b]$.
\end{definition}

We will denote by $\partial_{i}L$ the partial derivative of $L$ with
respect to its $i$-th argument, $i = 1,2,3$.

\begin{theorem}[Necessary and sufficient condition of invariance]
\label{theo:cnsi} If functional \eqref{P} is invariant under
transformations \eqref{eq:tinf}, then
\begin{equation}
\label{eq:cnsi}
\begin{split}
\partial_{1}
&L\left(t,q,\dot{q}\right)\tau+\partial_{2}
L\left(t,q,\dot{q}\right)\cdot\xi  \\
&+\partial_{3}
L\left(t,q,\dot{q}\right)\cdot\left(\dot{\xi}-\dot{q}\dot{\tau}\right)
+L\left(t,q,\dot{q}\right)\dot{\tau}=0 \, .
\end{split}
\end{equation}
\end{theorem}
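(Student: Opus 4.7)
The plan is to start from the invariance identity \eqref{eq:inv1}, perform the substitution $\bar t = t+s\tau(t,q)+o(s)$ on the right-hand side to pull both integrals onto the common interval $[t_a,t_b]$, expand the resulting integrand to first order in $s$, then use the arbitrariness of the subinterval $[t_a,t_b]\subseteq[a,b]$ to obtain the pointwise identity \eqref{eq:cnsi}.

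First I would compute the Jacobian of the change of variables: $d\bar t/dt = 1+s\dot\tau(t,q,\dot q)+o(s)$, where $\dot\tau$ is the total derivative along $q(\cdot)$. For the transformed velocity I would expand
\begin{equation*}
\dot{\bar q}(\bar t)=\frac{d\bar q/dt}{d\bar t/dt}=\frac{\dot q+s\dot\xi+o(s)}{1+s\dot\tau+o(s)}=\dot q+s\bigl(\dot\xi-\dot q\dot\tau\bigr)+o(s)\, ,
\end{equation*}
which is the standard formula for the induced infinitesimal transformation of the derivative. Substituting these expansions into the right-hand side of \eqref{eq:inv1} yields
\begin{equation*}
\int_{t_a}^{t_b} L\bigl(t+s\tau,\,q+s\xi,\,\dot q+s(\dot\xi-\dot q\dot\tau)+o(s)\bigr)\bigl(1+s\dot\tau+o(s)\bigr)\,dt\, .
\end{equation*}

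Next I would subtract $\int_{t_a}^{t_b}L(t,q,\dot q)\,dt$ from both sides, divide by $s$, and let $s\to 0$. A first-order Taylor expansion of $L$ in its three arguments, combined with the expansion of the Jacobian factor, gives
\begin{equation*}
\int_{t_a}^{t_b}\Bigl[\partial_1 L\,\tau+\partial_2 L\cdot\xi+\partial_3 L\cdot(\dot\xi-\dot q\dot\tau)+L\,\dot\tau\Bigr]dt=0\, .
\end{equation*}
Finally, since \eqref{eq:inv1} is assumed to hold for every subinterval $[t_a,t_b]\subseteq[a,b]$, the fundamental lemma of the calculus of variations (in its simplest form: if the integral of a continuous function vanishes on every subinterval, the function is identically zero) forces the integrand to vanish pointwise, which is precisely \eqref{eq:cnsi}.

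The computation is essentially routine; the only step requiring mild care is the derivation of the transformation law for $\dot{\bar q}(\bar t)$, making sure the quotient expansion is carried only to order $s$ and that the resulting $o(s)$ remainders are uniform in $t$ on $[a,b]$ so that they can be safely dropped inside the integral. Once that is in place, the regularity assumptions ($L\in C^1$, $q\in C^2$) ensure that Taylor expansion and passing the derivative $d/ds|_{s=0}$ inside the integral are both justified.
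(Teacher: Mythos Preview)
Your argument is correct and essentially the same as the paper's: both perform the change of variables $\bar t\mapsto t$ to pick up the Jacobian $d\bar t/dt$, expand the transformed velocity as $\dot q+s(\dot\xi-\dot q\dot\tau)+o(s)$, differentiate at $s=0$, and invoke the arbitrariness of $[t_a,t_b]$. The only cosmetic difference is the order of operations---the paper uses arbitrariness of the subinterval \emph{first} to pass from \eqref{eq:inv1} to a pointwise identity and then differentiates in $s$, whereas you differentiate under the integral sign first and strip the integral afterwards; both orderings are valid under the stated regularity.
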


\begin{proof}
Since \eqref{eq:inv1} is to be satisfied for any
subinterval $[{t_{a}},{t_{b}}]$ of $[a,b]$, one can get rid off
of the integral sign in \eqref{eq:inv1} and write the
equivalent equality
\begin{equation}
\label{eq:inv2} L\left(t,q,\dot{q}\right) =\left[
L\left(t+s\tau+o(s),q+s\xi+o(s), \frac{\dot{q}+
s\dot{\xi}+o(s)}{1+s
\dot{\tau}+o(s)}\right)\right]\frac{d\bar{t}}{dt} \, .
\end{equation}
Equation \eqref{eq:cnsi} is obtained differentiating both sides of
condition \eqref{eq:inv2} with respect to $s$ and then putting
$s=0$.
\end{proof}

\begin{definition}[Constant of motion]
A quantity $C(t,q(t),\dot{q}(t))$, $t \in [a,b]$, is said to be a
\emph{constant of motion} if
$\frac{d}{dt}C(t,q(t),\dot{q}(t))=0$ for all the solutions $q$ of
the Euler-Lagrange equation
\begin{equation}
\label{eq:el} \frac{d}{{dt}}\partial_{3} L\left(t,q(t),\dot{q}(t)\right) =
\partial_{2} L\left(t,q(t),\dot{q}(t)\right) \, .
\end{equation}
\end{definition}

\begin{theorem}[DuBois-Reymond necessary optimality condition]
\label{theo:cdr} If function $q$ is a minimizer or maximizer of
functional (\ref{P}), then
\begin{equation}
\label{eq:cdr}
\partial_{1}
L\left(t,q(t),\dot{q}(t)\right)=\frac{d}{dt}\left\{L\left(t,q(t),\dot{q}(t)\right)
-\partial_{3} L\left(t,q(t),\dot{q}(t)\right)\cdot\dot{q}(t)\right\}\, .
\end{equation}
\end{theorem}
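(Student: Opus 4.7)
The plan is to derive \eqref{eq:cdr} as a direct algebraic consequence of the Euler-Lagrange equation \eqref{eq:el}, which by the standard first variation argument any $C^2$ minimizer or maximizer of \eqref{P} must satisfy under the stated smoothness hypotheses on $L$ and $q$.

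First I would apply the ordinary chain rule to the composite real function $t\mapsto L(t,q(t),\dot q(t))$ to obtain
\[
\frac{d}{dt} L\left(t,q,\dot q\right) = \partial_1 L + \partial_2 L\cdot\dot q + \partial_3 L\cdot\ddot q,
\]
and then the product rule on $\partial_3 L(t,q,\dot q)\cdot \dot q$ to write
\[
\frac{d}{dt}\bigl(\partial_3 L\cdot \dot q\bigr) = \Bigl(\frac{d}{dt}\partial_3 L\Bigr)\cdot \dot q + \partial_3 L\cdot \ddot q.
\]
Subtracting these two identities cancels the $\partial_3 L\cdot\ddot q$ terms and yields the purely algebraic relation
\[
\frac{d}{dt}\bigl\{L - \partial_3 L\cdot \dot q\bigr\} = \partial_1 L + \Bigl(\partial_2 L - \frac{d}{dt}\partial_3 L\Bigr)\cdot \dot q.
\]

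At this point I would invoke \eqref{eq:el}: along any extremal the parenthesized factor vanishes identically, which collapses the right-hand side to $\partial_1 L$ and gives exactly \eqref{eq:cdr}.

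The only delicate point in this plan is a regularity issue, not a conceptual one: making sense of $\frac{d}{dt}\partial_3 L(t,q(t),\dot q(t))$ via the chain rule really requires $\partial_3 L$ to be $C^1$ in its arguments, i.e.\ $L\in C^2$, rather than merely $C^1$ as stipulated before \eqref{P}. This is the familiar gap in textbook derivations of the DuBois-Reymond condition and is normally bridged either by tacitly strengthening the smoothness hypothesis on $L$, or by appealing to the Hilbert-type regularity theorem stating that along an extremal the momentum $\partial_3 L$ is automatically absolutely continuous (so that \eqref{eq:el} can be read as an equation between AC functions and the computation above goes through a.e.). Since the classical theorem is recalled here only as a guide for the forthcoming quantum analogue, I would simply work in the $C^2$ setting and skip this digression.
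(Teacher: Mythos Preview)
Your argument is correct and matches the paper's proof essentially line for line: the paper likewise expands $\frac{d}{dt}\{L-\partial_3 L\cdot\dot q\}$ by the chain and product rules, cancels the $\partial_3 L\cdot\ddot q$ terms, and invokes the Euler--Lagrange equation \eqref{eq:el} to kill the remaining bracket. Your closing remark on the $C^1$ versus $C^2$ regularity of $L$ is a fair side observation that the paper simply glosses over.
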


\begin{proof}
The conclusion follows by direct calculations using the Euler-Lagrange
equation \eqref{eq:el}:
\begin{equation*}
\begin{split}
\frac{d}{dt} \{L\left(t,q,\dot{q}\right)&-\partial_{3}
L\left(t,q,\dot{q}\right)\cdot\dot{q}\}\\
&=\partial_{1} L\left(t,q,\dot{q}\right) +\partial_{2}
L\left(t,q,\dot{q}\right)\cdot\dot{q} +\partial_{3}
L\left(t,q,\dot{q}\right)\cdot\ddot{q} \\
& \qquad -\frac{d}{dt}\partial_{3}
L\left(t,q,\dot{q}\right)\cdot\dot{q}-\partial_{3}
L\left(t,q,\dot{q}\right)\cdot\ddot{q}\\
&=\partial_{1} L\left(t,q,\dot{q}\right)+\dot{q}\cdot(\partial_{2}
L\left(t,q,\dot{q}\right)-\frac{d}{dt}\partial_{3}
L\left(t,q,\dot{q}\right))\\
&=\partial_{1} L\left(t,q,\dot{q}\right)\, .
\end{split}
\end{equation*}
\end{proof}

\begin{theorem}[Noether's theorem]
\label{theo:tnoe} If \eqref{P} is invariant under
\eqref{eq:tinf}, then
\begin{equation}
\label{eq:TeoNet} C(t,q,\dot{q}) =
\partial_{3} L\left(t,q,\dot{q}\right)\cdot\xi(t,q)
+ \left( L(t,q,\dot{q}) - \partial_{3} L\left(t,q,\dot{q}\right)
\cdot \dot{q} \right) \tau(t,q)
\end{equation}
is a constant of motion.
\end{theorem}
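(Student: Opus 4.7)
The plan is to verify directly that $\frac{d}{dt}C(t,q,\dot q)=0$ along any solution of the Euler--Lagrange equation \eqref{eq:el}, by combining three ingredients already available in the text: the necessary and sufficient condition of invariance \eqref{eq:cnsi}, the Euler--Lagrange equation \eqref{eq:el}, and the DuBois--Reymond condition \eqref{eq:cdr}. In essence, Noether's theorem should drop out as the differentiated version of \eqref{eq:cnsi}, with the two ``troublesome'' derivatives $\frac{d}{dt}\partial_3 L$ and $\frac{d}{dt}(L-\partial_3 L\cdot\dot q)$ being exactly what the Euler--Lagrange and DuBois--Reymond identities evaluate for us.

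Concretely, I would first differentiate
\[
C(t,q,\dot q) = \partial_3 L\cdot\xi(t,q) + \bigl(L - \partial_3 L\cdot\dot q\bigr)\tau(t,q)
\]
with respect to $t$ using the product rule, which produces four terms: $\tfrac{d}{dt}(\partial_3 L)\cdot\xi$, $\partial_3 L\cdot\dot\xi$, $\tfrac{d}{dt}(L-\partial_3 L\cdot\dot q)\,\tau$, and $(L-\partial_3 L\cdot\dot q)\,\dot\tau$. Next I would substitute $\tfrac{d}{dt}\partial_3 L = \partial_2 L$ from \eqref{eq:el} in the first term, and $\tfrac{d}{dt}(L-\partial_3 L\cdot\dot q) = \partial_1 L$ from \eqref{eq:cdr} in the third term. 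After these two replacements, the four terms reorganize into
\[
\partial_1 L\,\tau + \partial_2 L\cdot\xi + \partial_3 L\cdot(\dot\xi-\dot q\,\dot\tau) + L\,\dot\tau,
\]
which is exactly the left-hand side of the invariance identity \eqref{eq:cnsi}, hence equals zero. Therefore $\tfrac{d}{dt}C\equiv 0$ along Euler--Lagrange extremals.

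There is no real obstacle here beyond bookkeeping: the conceptual work has already been done by Theorems~\ref{theo:cnsi} and \ref{theo:cdr}. The only small care needed is in grouping the term $-\partial_3 L\cdot\dot q\,\dot\tau$ with $\partial_3 L\cdot\dot\xi$ to recover the combination $\partial_3 L\cdot(\dot\xi-\dot q\dot\tau)$ that appears in \eqref{eq:cnsi}; this is purely algebraic. The proof will therefore be short and computational, and its structure is exactly the template that later must be mimicked for the scale-derivative version (Theorem~\ref{theo:tnnd}), where the analogous Euler--Lagrange and DuBois--Reymond identities will be the scale-calculus ones and the manipulations will cost more due to the extra conjugate terms in Theorem~\ref{theo:mult}.
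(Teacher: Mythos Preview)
Your proposal is correct and uses exactly the same three ingredients as the paper's proof (the invariance condition \eqref{eq:cnsi}, the Euler--Lagrange equation \eqref{eq:el}, and the DuBois--Reymond identity \eqref{eq:cdr}); the only cosmetic difference is direction --- you start from $\tfrac{d}{dt}C$ and reduce it to the left-hand side of \eqref{eq:cnsi}, whereas the paper starts from \eqref{eq:cnsi} and rewrites it as $\tfrac{d}{dt}C$. The chain of equalities is identical either way.
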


\begin{proof}
To prove Noether's theorem we use the
Euler-Lagrange equations \eqref{eq:el} and the
DuBois-Reymond necessary optimality
condition \eqref{eq:cdr} into the necessary and sufficient
condition of invariance \eqref{eq:cnsi}:
\begin{equation*}
\begin{split}
0 &= \partial_{1} L\left(t,q,\dot{q}\right)\tau+\partial_{2} L\left(t,q,\dot{q}\right)\cdot\xi
+\partial_{3}
L\left(t,q,\dot{q}\right)\cdot\left(\dot{\xi}-\dot{q}\dot{\tau}\right)
+L\left(t,q,\dot{q}\right)\dot{\tau} \\
&=\partial_{2} L\left(t,q,\dot{q}\right)\cdot\xi+\partial_{3}
L\left(t,q,\dot{q}\right)\cdot\dot{\xi}+\partial_{1}
L\left(t,q,\dot{q}\right)\tau\\
&\qquad +\dot{\tau}\left(L\left(t,q,\dot{q}\right)-\partial_{3}
L\left(t,q,\dot{q}\right)\cdot\dot{q}\right)\\
&=\frac{d}{{dt}}\partial_{3}
L\left(t,q,\dot{q}\right)\cdot\xi+\partial_{3}
L\left(t,q,\dot{q}\right)\cdot\dot{\xi}+\frac{d}{dt}\left\{L\left(t,q,\dot{q}\right)-\partial_{3}
L\left(t,q,\dot{q}\right)\cdot\dot{q}\right\}\tau\\
&\qquad +\dot{\tau}\left(L\left(t,q,\dot{q}\right)-\partial_{3}
L\left(t,q,\dot{q}\right)\cdot\dot{q}\right)\\
&=\frac{d}{dt}\left\{\partial_{3} L\left(t,q,\dot{q}\right)\cdot\xi +
\left( L(t,q,\dot{q}) -
\partial_{3} L\left(t,q,\dot{q}\right) \cdot \dot{q} \right)
\tau\right\}\, .
\end{split}
\end{equation*}
\end{proof}


\section{Main Results: Non-differentiable Noether-type Theorems}

The classical Noether's theorem is valid
for extremals $q(\cdot)$ which are $C^2$ differentiable,
as considered in Section~\ref{sec:cNT}. The biggest
class where a Noether-type theorem has been proved is
the class of Lipschitz functions \cite{CD:JMS:Torres:2004}.
In this work we prove a more general Noether-type theorem, valid for
non-differentiable scale extremals.


\subsection{Calculus of variations with scale derivatives}

In \cite{CD:Cresson:2005} the calculus of variations with scale
derivatives is introduced and respective Euler-Lagrange equations
derived. In this section we obtain a formulation of Noether's
theorem for the scale calculus of variations. The proof of our
Noether's theorem is done in two steps: first we extend the
DuBois-Reymond condition to problems with scale derivatives
(Theorem~\ref{theo:cdrnd}); then, using this result, we obtain the
scale/quantum Noether's theorem (Theorem~\ref{theo:tnnd}).

The problem of the calculus of variations with scale derivatives
is defined as
\begin{gather}
\label{Pe} I[q(\cdot)] = \int_{a-\epsilon}^{b+\epsilon}
L\left(t,q(t),\square_{\epsilon}q(t)\right) dt \longrightarrow \min
\end{gather}
under given boundary conditions $q(a-\epsilon)=q_{a_\epsilon}$
and $q(b+\epsilon)=q_{b_\epsilon}$, $0<\epsilon\ll 1$,
$q(\cdot) \in H^{\alpha}$, $0<\alpha<1$.
The Lagrangian $L :[a-\epsilon,b+\epsilon] \times
\mathbb{R}^{n}\times \mathbb{C}^{n} \rightarrow \mathbb{C}$ is
assumed to be a $C^{1}$-function with respect to all its arguments
satisfying
\begin{equation}
\label{eq:ctnnd}
 ||DL(t,q(t),\square_{\epsilon}q(t))||\leq K \, ,
\end{equation}
where $K$ is a non-negative constant, $D$ denotes the differential
and $||\cdot||$ is a norm for matrices (see \cite{CD:Cresson:2005}).

\begin{remark}
In the case of admissible differentiable functions $q(\cdot)$,
problem~\eqref{Pe} tends to problem \eqref{P} when $\epsilon$
tends to zero.
\end{remark}

\begin{remark}
We need to assume $a-\epsilon\leq t\leq b+\epsilon$ in order to
avoid problems with the definition of scale derivative in the
boundaries of the interval.
\end{remark}

\begin{theorem}[Scale Euler-Lagrange equation -- \textrm{cf.} \cite{CD:Cresson:2005}]
\label{Thm:NonDtELeq} If $q$ is a minimizer of problem \eqref{Pe},
then $q$ satisfy the following \emph{scale Euler-Lagrange
equation}:
\begin{equation}
\label{eq:elnd}
\partial_{2} L\left(t,q(t),\square_{\epsilon}q(t)\right)-\square_{\epsilon}
\partial_{3} L\left(t,q(t),\square_{\epsilon}q(t)\right)=0\, .
\end{equation}
\end{theorem}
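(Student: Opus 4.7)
The plan is to follow the standard variational strategy, suitably adapted so that the non-classical Leibniz rule \eqref{eq:mult} and the integration-by-parts identity \eqref{eq:int} can be exploited, with the estimates of Lemma~\ref{lem:lema} controlling the ``quantum'' error terms. First I would fix a test variation $h(\cdot)\in H^{\beta}$ with $h(a-\epsilon)=h(b+\epsilon)=0$ and $\beta$ chosen in the range prescribed in Lemma~\ref{lem:lema}, and consider the perturbed curve $q+sh$ for $s\in\mathbb{R}$ small. Since $q$ is a minimizer of $I$ under fixed endpoint conditions, the scalar function $\varphi(s):=I[q+sh]$ has a minimum at $s=0$, hence $\varphi'(0)=0$.

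Next I would differentiate under the integral sign. Using the hypothesis \eqref{eq:ctnnd}, which makes $DL$ uniformly bounded and therefore legitimates the interchange of $\frac{d}{ds}$ with the integral, and using the $\mathbb{C}$-linearity of the scale derivative $\square_\epsilon(q+sh)=\square_\epsilon q+s\,\square_\epsilon h$, I obtain
\begin{equation*}
0=\int_{a-\epsilon}^{b+\epsilon}\left[\partial_{2}L\bigl(t,q,\square_\epsilon q\bigr)\cdot h(t)+\partial_{3}L\bigl(t,q,\square_\epsilon q\bigr)\cdot\square_\epsilon h(t)\right]dt.
\end{equation*}
The aim is now to move $\square_\epsilon$ from $h$ onto $\partial_{3}L$. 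Applying the quantum Leibniz rule \eqref{eq:mult} to the product $\partial_{3}L\cdot h$, and integrating over $[a-\epsilon,b+\epsilon]$, the identity \eqref{eq:int} kills the total scale derivative up to boundary evaluations of the mean functions; those boundary evaluations vanish because $h$ itself vanishes at $a-\epsilon$ and $b+\epsilon$. The remaining terms are of two kinds: the wanted term $-\int\square_\epsilon\partial_{3}L\cdot h\,dt$, and correction terms of the form $\epsilon\int Op_\epsilon(\partial_{3}L)\,Op'_\epsilon(h)\,dt$ with $Op_\epsilon,Op'_\epsilon\in\{\square_\epsilon,\boxminus_\epsilon\}$.

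At this point the main work is handled by Lemma~\ref{lem:lema}. The bound \eqref{eq:ctnnd} guarantees the hypothesis $\sup|Op_\epsilon(\partial_{3}L)|\le C\epsilon^{\alpha-1}$ needed there (after identifying the scaling exponent $\alpha$ coming from $q\in H^{\alpha}$), so the correction terms are $o(\epsilon^{\alpha+\beta})$ and disappear in the relevant limit. Gathering the pieces, the first variation reduces to
\begin{equation*}
\int_{a-\epsilon}^{b+\epsilon}\Bigl[\partial_{2}L\bigl(t,q,\square_\epsilon q\bigr)-\square_\epsilon\partial_{3}L\bigl(t,q,\square_\epsilon q\bigr)\Bigr]\cdot h(t)\,dt=0
\end{equation*}
for every admissible $h$. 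Finally I would invoke a fundamental-lemma-of-the-calculus-of-variations argument in the $H^{\beta}$ class (localizing $h$ around an arbitrary interior point) to conclude \eqref{eq:elnd} pointwise.

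The step I expect to be the real obstacle is the integration-by-parts phase: the Leibniz rule \eqref{eq:mult} is not the classical one, and one must keep careful track of the $\square$ versus $\boxminus$ operators in every correction term and verify that each of them fits the template controlled by Lemma~\ref{lem:lema}. Choosing the H\"{o}lder exponent $\beta$ of the test function correctly (so that $\alpha+\beta-1>0$ or $\alpha+\beta>0$ in the right places) is what ultimately makes the extraneous quantum contributions disappear and allows the classical-looking Euler-Lagrange equation to survive.
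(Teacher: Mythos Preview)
The paper does not prove Theorem~\ref{Thm:NonDtELeq} at all: it is quoted from \cite{CD:Cresson:2005} (note the ``cf.'' in the header) and is used here only as an input for Theorems~\ref{theo:cdrnd}, \ref{theo:tnnd}, and \ref{theo:pmpnd}. So there is no in-paper proof to compare your proposal against. Your outline is, in fact, precisely the argument Cresson gives in \cite{CD:Cresson:2005}: perturb by $h\in H^\beta$ with $h$ vanishing at the endpoints, differentiate under the integral, use the quantum Leibniz rule \eqref{eq:mult} to move $\square_\epsilon$ off $h$, and then invoke Lemma~\ref{lem:lema} to kill both the total-derivative integral and the $\epsilon\int Op_\epsilon(\cdot)Op'_\epsilon(h)$ corrections.

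Two small points to tighten. First, the boundary contribution from \eqref{eq:int} does \emph{not} vanish exactly just because $h(a-\epsilon)=h(b+\epsilon)=0$: what appears at the endpoints are the $\epsilon$-means $(\partial_3L\cdot h)_\epsilon^{\pm}$, not the pointwise values. The correct statement is that $\int\square_\epsilon(\partial_3L\cdot h)\,dt=o(\epsilon^{\alpha+\beta-1})$, which is exactly the first estimate of Lemma~\ref{lem:lema}; you implicitly acknowledge this, but the sentence ``those boundary evaluations vanish because $h$ itself vanishes'' is misleading. Second, your phrase ``disappear in the relevant limit'' hides the fact that no limit in $\epsilon$ is being taken: the theorem is stated for fixed small $\epsilon$, and in Cresson's framework the Euler--Lagrange equation \eqref{eq:elnd} is obtained after discarding $o(\epsilon^{\alpha+\beta-1})$ remainders, which are negligible at the working resolution. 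If you want your write-up to be self-contained, say this explicitly rather than alluding to a limit.
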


\begin{definition}[Scale extremals]
\label{def:scale:ext}
The solutions $q(t)$ of the scale Euler-Lagrange
equation \eqref{eq:elnd} are called \emph{scale extremals}.
\end{definition}

\begin{definition}[\textrm{cf.} Definition~\ref{def:inva}]
\label{def:invnd} The functional \eqref{Pe} is said to be
invariant under a $s$-parameter group of infinitesimal
transformations
\begin{equation}
\label{eq:tinf:sd}
\begin{cases}
\bar{t} = t + s\tau(t,q) + o(s) \, ,\\
\bar{q}(t) = q(t) + s\xi(t,q) + o(s) \, ,\\
\end{cases}
\end{equation}
$\tau$, $\xi\in H^{\beta}$, $\beta\geq \alpha
1_{[1/2,1]}+(1-\alpha)1_{]0,1/2[}$, if
\begin{equation}
\label{eq:invnd}
\int_{t_{a}}^{t_{b}}L\left(t,q(t),\square_{\epsilon}q(t)\right)dt=\int_{\bar{t}(t_a)}^{\bar{t}(t_b)}
L\left(\bar{t},\bar{q}(\bar{t}),\square_{\epsilon}\bar{q}(\bar{t})\right)dt
\end{equation}
for any subinterval $[{t_{a}},{t_{b}}] \subseteq
[a-\epsilon,b+\epsilon]$.
\end{definition}

Theorem~\ref{thm:CNSI:SCV} establish a necessary and sufficient
condition of invariance for \eqref{Pe}. Condition
\eqref{eq:cnsind} will be used in the proof of our Noether-type
theorem.

\begin{theorem}[\textrm{cf.} Theorem~\ref{theo:cnsi}]
\label{thm:CNSI:SCV} If functional \eqref{Pe} is invariant under
the one-parameter group of transformations \eqref{eq:tinf:sd},
then
\begin{multline}
\label{eq:cnsind} \int_{t_{a}}^{t_{b}}\Bigl[\partial_{1}
L\left(t,q(t),\square_{\epsilon}q(t)\right)\tau +\partial_{2}
L\left(t,q(t),\square_{\epsilon}q(t)\right)\cdot\xi
\\
+\partial_{3}
L\left(t,q(t),\square_{\epsilon}q(t)\right)\cdot\left(\square_{\epsilon}\xi
-\square_{\epsilon}q(t)\square_{\epsilon}\tau\right)\Bigr]dt
 = 0
\end{multline}
for any subinterval $[{t_{a}},{t_{b}}] \subseteq
[a-\epsilon,b+\epsilon]$.
\end{theorem}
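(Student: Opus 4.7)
The argument follows the template of the classical proof of Theorem~\ref{theo:cnsi}, but crucially keeps the identity at the level of the integral. In the smooth case one removes the integral sign by invoking arbitrariness of $[t_a,t_b]$ and then differentiates the resulting pointwise equality in $s$ at $s=0$; in the scale setting the scale derivative does not transform cleanly under a time reparametrization, so the integral sign is preserved and only the coefficient of $s$ in the expansion of the right-hand side of \eqref{eq:invnd} is used.

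The first step is to expand the transformations \eqref{eq:tinf:sd} to first order in $s$ and deduce the corresponding expansion of $\square_{\epsilon}\bar{q}(\bar{t})$. From $\bar{t}=t+s\tau+o(s)$ and linearity of $\square_{\epsilon}$ one has $\square_{\epsilon}\bar{t}=1+s\,\square_{\epsilon}\tau+o(s)$, while $\bar{q}(\bar{t})=q(t)+s\xi+o(s)$ gives $\square_{\epsilon}[\bar{q}(\bar{t})]=\square_{\epsilon}q+s\,\square_{\epsilon}\xi+o(s)$ as an identity in $t$. Interpreting $\square_{\epsilon}\bar{q}$ as a scale derivative with respect to $\bar{t}$ by the formal quotient $(\square_{\epsilon}\bar{q})/(\square_{\epsilon}\bar{t})$ and expanding in $s$ yields
\[
\square_{\epsilon}\bar{q}(\bar{t})=\square_{\epsilon}q+s\bigl(\square_{\epsilon}\xi-\square_{\epsilon}q\,\square_{\epsilon}\tau\bigr)+o(s).
\]

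The second step is to substitute these expansions into the integrand on the right-hand side of \eqref{eq:invnd} and Taylor-expand $L$ to first order in $s$, using that $L\in C^{1}$ and the bound \eqref{eq:ctnnd}:
\[
L(\bar{t},\bar{q}(\bar{t}),\square_{\epsilon}\bar{q}(\bar{t}))=L(t,q,\square_{\epsilon}q)+s\bigl[\partial_{1}L\,\tau+\partial_{2}L\cdot\xi+\partial_{3}L\cdot(\square_{\epsilon}\xi-\square_{\epsilon}q\,\square_{\epsilon}\tau)\bigr]+o(s).
\]
Insert this into \eqref{eq:invnd}, subtract the zeroth-order piece $\int_{t_a}^{t_b}L\,dt$, divide by $s$, and let $s\to 0$. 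Because \eqref{eq:invnd} is an identity in $s$, the first-order-in-$s$ coefficient must vanish; the arbitrariness of $[t_a,t_b]\subseteq[a-\epsilon,b+\epsilon]$ then delivers \eqref{eq:cnsind}.

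\textbf{Main obstacle.} The delicate point is the first step: justifying the ``quotient-rule'' expansion of $\square_{\epsilon}\bar{q}(\bar{t})$ when $q$, $\tau$ and $\xi$ are only Hölderian. The scale Leibniz rule \eqref{eq:mult} carries an $O(\epsilon)$ correction absent in the classical calculus, and composing scale derivatives through the reparametrization $t\mapsto\bar{t}(t)$ produces remainder terms that must be controlled. The Hölder hypothesis $\tau,\xi\in H^{\beta}$ with $\beta\geq\alpha\,1_{[1/2,1]}+(1-\alpha)1_{]0,1/2[}$ is tailored exactly so that the estimates of Lemma~\ref{lem:lema} bound these remainders by $o(\epsilon^{\alpha+\beta-1})$ and $o(\epsilon^{\alpha+\beta})$, ensuring that they do not contribute to the first-order-in-$s$ coefficient and that the displayed integral identity is clean.
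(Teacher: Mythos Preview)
Your argument is essentially the paper's own: rewrite \eqref{eq:invnd} with the explicit $s$-dependence (the paper displays this as the intermediate identity with the quotient $\frac{\square_{\epsilon}q+s\,\square_{\epsilon}\xi+o(s)}{1+s\,\square_{\epsilon}\tau+o(s)}$), then differentiate in $s$ at $s=0$.

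One remark on your ``main obstacle'': it is largely a non-issue here, and the paper does not invoke Lemma~\ref{lem:lema} or the scale Leibniz rule \eqref{eq:mult} in this proof. The differentiation is in the \emph{group parameter} $s$, not in $t$; since $\square_{\epsilon}$ is linear in its argument, $\square_{\epsilon}(q+s\xi)=\square_{\epsilon}q+s\,\square_{\epsilon}\xi$ holds exactly, and the quotient is then an ordinary rational function of $s$ whose derivative at $s=0$ is computed by the classical quotient rule, with no $\epsilon$-correction terms appearing. Lemma~\ref{lem:lema} enters only later, in the proof of Theorem~\ref{theo:tnnd}, where one must combine products via \eqref{eq:mult} and discard the resulting $\epsilon$-remainders. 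The genuine subtlety you are circling---whether the scale derivative with respect to $\bar t$ really coincides with the formal quotient---is simply asserted in the paper as well, so you are no less rigorous than the original on that point.
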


\begin{proof}
Equation \eqref{eq:invnd} is equivalent to
\begin{multline}
\label{eq:invndd}
\int_{t_{a}}^{t_{b}}L\left(t,q(t),\square_{\epsilon}q(t)\right)dt\\
= \int_{t_a+s\tau}^{t_b+s\tau} L\left(t+s\tau+o(s),q+s\xi+o(s)
,\frac{\square_{\epsilon}q+s\square_{\epsilon}\xi+o(s)}
{1+s\square_{\epsilon}\tau+o(s)}\right)dt\, .
\end{multline}
Differentiating both sides of equation \eqref{eq:invndd} with
respect to $s$, then putting $s=0$, we obtain equality
\eqref{eq:cnsind}.
\end{proof}

\begin{definition}[Scale constants of motion]
\label{def:leicond} We say that quantity
$C(t,q(t),\square_{\epsilon}q(t))$ is a \emph{scale constant of
motion} if, and only if,
$C(t,q(t),\square_{\epsilon}q(t))=constant$ along all the
scale extremals $q(\cdot)$ (\textrm{cf.} Definition~\ref{def:scale:ext}).
\end{definition}

Theorem~\ref{theo:cdrnd} generalizes the DuBois-Reymond necessary
optimality condition (\textrm{cf.} Theorem~\ref{theo:cdr}) for
problems of the calculus of variations with scale derivatives.

\begin{theorem}[Scale DuBois-Reymond necessary condition]
\label{theo:cdrnd} If $q(\cdot)$ is a minimizer of problem
(\ref{Pe}), then it satisfy the following condition:
\begin{multline}
\label{eq:cdrnd} \frac{\square_{\epsilon}}{\square
t}\left\{L\left(t,q,\frac{\square_{\epsilon}q}{\square
t}\right)-\partial_{3} L\left(t,q,\frac{\square_{\epsilon}q}{\square
t}\right)\cdot\frac{\square_{\epsilon}q}{\square t}\right\} \\=
\partial_{1} L\left(t,q,\frac{\square_{\epsilon}q}{\square
t}\right)-\epsilon
i\left(\square_{\epsilon}f\boxminus_{\epsilon}g-\boxminus_{\epsilon}f\square_{\epsilon}g
-\square_{\epsilon}f\square_{\epsilon}g-\boxminus_{\epsilon}f\boxminus_{\epsilon}g\right)
\,
\end{multline}
where $f=\partial_{3} L\left(t,q,\frac{\square_{\epsilon}q}{\square
t}\right)$, $g=\frac{\square_{\epsilon}q}{\square t}$ , $\boxminus
f$ and $\boxminus g$ are the complex conjugate of $\square f$ and
$\square g$ respectively.
\end{theorem}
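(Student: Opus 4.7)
The plan is to mirror the short classical derivation of Theorem~\ref{theo:cdr}: compute $\square_\epsilon\{L - \partial_3 L \cdot \square_\epsilon q\}$ directly, then cancel the interior term by invoking the scale Euler--Lagrange equation \eqref{eq:elnd}. The only structural novelty compared with the classical case is that the pointwise Leibniz rule is replaced by its scale version (Theorem~\ref{theo:mult}), which carries the $\epsilon i(\cdots)$ correction; this correction is precisely what appears on the right-hand side of \eqref{eq:cdrnd}.

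First, I would apply the scale chain rule from Theorem~\ref{theo:derivada} to the composite function $t\mapsto L(t,q(t),\square_\epsilon q(t))$, producing
\begin{equation*}
\square_\epsilon L\left(t,q,\square_\epsilon q\right)
= \partial_1 L + \partial_2 L\cdot\square_\epsilon q
+ \partial_3 L\cdot\square_\epsilon(\square_\epsilon q) + R_\epsilon,
\end{equation*}
where $R_\epsilon$ collects the higher-order terms of the expansion in \eqref{eq:derivada}. Second, with $f=\partial_3 L(t,q,\square_\epsilon q)$ and $g=\square_\epsilon q$, I would apply Theorem~\ref{theo:mult} to obtain
\begin{equation*}
\square_\epsilon(f\cdot g)
= \square_\epsilon f\cdot g + f\cdot \square_\epsilon g
+ \epsilon i\bigl(\square_\epsilon f\boxminus_\epsilon g - \boxminus_\epsilon f\square_\epsilon g - \square_\epsilon f\square_\epsilon g - \boxminus_\epsilon f\boxminus_\epsilon g\bigr).
\end{equation*}
Subtracting these two identities, the term $\partial_3 L\cdot\square_\epsilon(\square_\epsilon q)$ cancels, leaving
\begin{equation*}
\square_\epsilon\{L - \partial_3 L\cdot\square_\epsilon q\}
= \partial_1 L + \bigl(\partial_2 L - \square_\epsilon\partial_3 L\bigr)\cdot\square_\epsilon q
- \epsilon i(\cdots) + R_\epsilon.
\end{equation*}

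Third, the bracket $\partial_2 L - \square_\epsilon\partial_3 L$ vanishes along a minimizer by the scale Euler--Lagrange equation \eqref{eq:elnd}, so the middle term drops out and we are left precisely with \eqref{eq:cdrnd}. The main obstacle I anticipate is controlling the remainder $R_\epsilon$ from Theorem~\ref{theo:derivada}: for $q\in H^\alpha$ with $\alpha<1$ the expansion carries genuinely $\alpha$-dependent error terms, and one must argue these are either absorbed into the stated $\epsilon i(\cdots)$ correction or are asymptotically negligible in the appropriate sense. This is where the boundedness hypothesis \eqref{eq:ctnnd} on $DL$ and the size estimates of Lemma~\ref{lem:lema} enter, guaranteeing that products of $\square_\epsilon$-quantities against Hölder-regular data remain of lower order than the leading identity; everything else is a bookkeeping exercise identical in spirit to the classical computation of Theorem~\ref{theo:cdr}.
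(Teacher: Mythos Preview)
Your three-step computation---expand $\square_\epsilon L$ by Theorem~\ref{theo:derivada}, expand $\square_\epsilon(\partial_3 L\cdot\square_\epsilon q)$ by Theorem~\ref{theo:mult}, subtract, and cancel the middle bracket with the scale Euler--Lagrange equation \eqref{eq:elnd}---is exactly the paper's proof. The only divergence is your final paragraph: the paper carries no explicit remainder $R_\epsilon$ and invokes neither hypothesis \eqref{eq:ctnnd} nor Lemma~\ref{lem:lema} here; it simply writes the chain-rule step as an equality. Note in particular that Lemma~\ref{lem:lema} is an integral estimate requiring boundary conditions $h(a)=h(b)=0$, so it is not applicable to the pointwise identity \eqref{eq:cdrnd}; that lemma and \eqref{eq:ctnnd} are reserved for the proof of Theorem~\ref{theo:tnnd}, where genuine integrals appear.
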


\begin{proof}
The scale DuBois-Reymond necessary condition \eqref{eq:cdrnd}
follows from the linearity of the scale derivative operator,
Theorems~\ref{theo:mult} and \ref{theo:derivada}, and the scale
Euler-Lagrange equations \eqref{eq:elnd}:
\begin{equation*}
\begin{split}
\square_{\epsilon}\Bigl\{L(t,&q,\square_{\epsilon}q)-\partial_{3}
L(t,q,\square_{\epsilon}q)\cdot\square_{\epsilon}q\Bigr\} \\
&=\partial_{1}L(t,q,\square_{\epsilon}q)
+\partial_{2}L(t,q,\square_{\epsilon}q)\cdot\square_{\epsilon}q
+\partial_{3}L(t,q,\square_{\epsilon}q)\cdot\square_{\epsilon}\square_{\epsilon}q \\
&\qquad-\square_{\epsilon}\partial_{3}L(t,q,\square_{\epsilon}q)\cdot\square_{\epsilon}q
-\partial_{3}L(t,q,\square_{\epsilon}q)\cdot\square_{\epsilon}\square_{\epsilon}q \\
&\qquad-\epsilon
i\left(\square_{\epsilon}f\boxminus_{\epsilon}g-\boxminus_{\epsilon}f\square_{\epsilon}g
-\square_{\epsilon}f\square_{\epsilon}g-\boxminus_{\epsilon}f\boxminus_{\epsilon}g\right)\\
&=\partial_{1}L(t,q,\square_{\epsilon}q)+\square_{\epsilon}q\cdot(\partial_{2}L(t,q,\square_{\epsilon}q)
-\square_{\epsilon}\partial_{3}L(t,q,\square_{\epsilon}q)) \\
&\qquad -\epsilon
i\left(\square_{\epsilon}f\boxminus_{\epsilon}g-\boxminus_{\epsilon}f\square_{\epsilon}g
-\square_{\epsilon}f\square_{\epsilon}g-\boxminus_{\epsilon}f\boxminus_{\epsilon}g\right)\\
&=\partial_{1}L(t,q,\square_{\epsilon}q)-\epsilon
i\left(\square_{\epsilon}f\boxminus_{\epsilon}g-\boxminus_{\epsilon}f\square_{\epsilon}g
-\square_{\epsilon}f\square_{\epsilon}g-\boxminus_{\epsilon}f\boxminus_{\epsilon}g\right)
\, .
\end{split}
\end{equation*}
\end{proof}

Theorem~\ref{theo:tnnd} establish an extension of Noether's
theorem for problems of the calculus of variations with scale
derivatives.

\begin{theorem}[Scale Noether's theorem in Lagrangian form]
\label{theo:tnnd} If functional \eqref{Pe} is invariant in the
sense of Definition~\ref{def:invnd}, then
\begin{multline}
\label{eq:tnnd} C(t,q(t),\square_{\epsilon}q(t)) =
\partial_{3}L(t,q,\square_{\epsilon}q))\cdot\xi(t,q)\\
+\Bigl(L(t,q,\square_{\epsilon}q)-\partial_{3}
L(t,q,\square_{\epsilon}q)\cdot\square_{\epsilon}q\Bigr)\tau(t,q)
\end{multline}
is a scale constant of motion (\textrm{cf.}
Definition~\ref{def:leicond}).
\end{theorem}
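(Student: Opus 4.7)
The strategy is to mirror the classical derivation of Theorem~\ref{theo:tnoe} but at the integral level, since in the scale setting the invariance identity \eqref{eq:cnsind} is available only under the integral sign. I would start from \eqref{eq:cnsind}, substitute the scale Euler-Lagrange equation \eqref{eq:elnd} to rewrite $\partial_{2}L\cdot\xi$ as $\square_{\epsilon}\partial_{3}L\cdot\xi$, and then substitute the scale DuBois-Reymond identity \eqref{eq:cdrnd} of Theorem~\ref{theo:cdrnd} to rewrite $\partial_{1}L\cdot\tau$ as $\square_{\epsilon}\bigl\{L-\partial_{3}L\cdot\square_{\epsilon}q\bigr\}\cdot\tau$ plus the explicit $\epsilon i(\cdots)$ remainder provided by that theorem.

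After these two substitutions the integrand splits into two groupings, each of the form $\square_{\epsilon}F\cdot G+F\cdot\square_{\epsilon}G$, which by the scale Leibniz rule of Theorem~\ref{theo:mult}, read in reverse, collapses into $\square_{\epsilon}(F\cdot G)$ up to a further $\epsilon i(\cdots)$ correction. Choosing the pairing $F\cdot G=\partial_{3}L\cdot\xi$ in one grouping and $F\cdot G=(L-\partial_{3}L\cdot\square_{\epsilon}q)\cdot\tau$ in the other, the two reassembled scale derivatives add to $\square_{\epsilon}C(t,q,\square_{\epsilon}q)$, with $C$ as defined in \eqref{eq:tnnd}. The residue consists of three $\epsilon i(\cdots)$ expressions, one from Theorem~\ref{theo:cdrnd} and two from the two applications of Theorem~\ref{theo:mult}, each of the form $\epsilon$ times products of $\square_{\epsilon}$ and $\boxminus_{\epsilon}$ of the functions $\partial_{3}L$, $L-\partial_{3}L\cdot\square_{\epsilon}q$, $\xi$ and $\tau$.

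The main obstacle is to show that these $\epsilon$-order remainders contribute negligibly once integrated on $[t_a,t_b]$. This is exactly the role of Lemma~\ref{lem:lema}: its two bounds $o(\epsilon^{\alpha+\beta-1})$ and $o(\epsilon^{\alpha+\beta})$ control respectively the Leibniz-error integrands $\int\square_{\epsilon}(f_{\epsilon}h)\,dt$ and the raw cross-term integrals $\epsilon\int Op_{\epsilon}(f_{\epsilon})Op'_{\epsilon}(h)\,dt$ produced above, where $h\in\{\xi,\tau\}$ and $f_{\epsilon}$ is built from $L$ and its scale derivatives. The H\"olderian hypothesis $\beta\geq\alpha 1_{[1/2,1]}+(1-\alpha)1_{]0,1/2[}$ imposed on $\tau,\xi$ in Definition~\ref{def:invnd} is chosen precisely so that $\alpha+\beta\geq 1$, so both bounds are $o(1)$. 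The supremum bound $\sup|f_{\epsilon}|\leq C\epsilon^{\alpha-1}$ required by the lemma is verified using the boundedness assumption \eqref{eq:ctnnd} on $DL$ together with the expansion \eqref{eq:derivada} of Theorem~\ref{theo:derivada}.

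After these absorptions, \eqref{eq:cnsind} reduces to
\[
\int_{t_a}^{t_b}\frac{\square_{\epsilon}}{\square t}C\bigl(t,q(t),\square_{\epsilon}q(t)\bigr)\,dt=o(1)
\]
for every subinterval $[t_a,t_b]\subseteq[a-\epsilon,b+\epsilon]$. Arbitrariness of the subinterval together with the integral representation \eqref{eq:int} then forces $\square_{\epsilon}C=0$ along every scale extremal $q(\cdot)$, which is precisely the statement that $C$ is a scale constant of motion in the sense of Definition~\ref{def:leicond}.
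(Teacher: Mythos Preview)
Your proposal is correct and follows essentially the same route as the paper: start from the integral invariance identity \eqref{eq:cnsind}, replace $\partial_{2}L\cdot\xi$ via the scale Euler--Lagrange equation \eqref{eq:elnd} and $\partial_{1}L\,\tau$ via the scale DuBois--Reymond condition \eqref{eq:cdrnd}, regroup into two $\square_{\epsilon}F\cdot G+F\cdot\square_{\epsilon}G$ blocks, apply Theorem~\ref{theo:mult} in reverse, and absorb the resulting $\epsilon i(\cdots)$ remainders using Lemma~\ref{lem:lema} together with \eqref{eq:ctnnd}. Your account of why Lemma~\ref{lem:lema} applies (the r\^ole of $\alpha+\beta\geq 1$ and of the bound \eqref{eq:ctnnd}) is in fact more explicit than the paper's, which simply records the remainders as $R(\epsilon)$, $R'(\epsilon)$ and asserts they vanish; the only cosmetic difference is that the paper phrases the conclusion as $C=\text{constant}$ directly from \eqref{eq:int}, rather than passing through $\square_{\epsilon}C=0$.
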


\begin{remark}
If the admissible functions $q$ are differentiable, the scale
constant of motion \eqref{eq:tnnd} tends to \eqref{eq:TeoNet} when
we take the limit $\epsilon \rightarrow 0$.
\end{remark}

\begin{proof}
Noether's scale constant of motion \eqref{eq:tnnd} follows by
using the scale DuBois-Raymond condition \eqref{eq:cdrnd}, the
scale Euler-Lagrange equation \eqref{eq:elnd} and
Theorem~\ref{theo:mult} into the necessary and sufficient
condition of invariance \eqref{eq:cnsind}:
\begin{equation}
\label{eq:dtnnd}
\begin{split}
0&=\int_{t_{a}}^{t_{b}}\Bigl[\partial_{1}
L\left(t,q(t),\square_{\epsilon}q(t)\right)\tau +\partial_{2}
L\left(t,q(t),\square_{\epsilon}q(t)\right)\cdot\xi \\
& \qquad\qquad +\partial_{3}
L\left(t,q,\square_{\epsilon}q\right)\cdot\left(\square_{\epsilon}\xi
-\square_{\epsilon}q\square_{\epsilon}\tau\right)
+L\square_{\epsilon}\tau-L\square_{\epsilon}\tau\Bigr]dt\\
&=\int_{t_{a}}^{t_{b}}\Bigl[\tau\square_{\epsilon}(L\left(t,q,\square_{\epsilon}q\right)-\partial_{3}
L\left(t,q,\square_{\epsilon}q\right)\cdot\square_{\epsilon}q)\\
&\qquad\qquad +\left(L\left(t,q,\square_{\epsilon}q\right)
-\partial_{3}
L\left(t,q,\square_{\epsilon}q\right)\cdot\square_{\epsilon}q\right)\square_{\epsilon}\tau \\
&\qquad\qquad +\xi\cdot\square_{\epsilon}\partial_{3}
L\left(t,q,\square_{\epsilon}q\right) +\partial_{3}
L\left(t,q,\square_{\epsilon}q\right)\cdot\square_{\epsilon}\xi\Bigr]dt+R(\epsilon)\\
&=\int_{t_{a}}^{t_{b}}\frac{\square_{\epsilon}}{\square
t}\Big\{\partial_{3}
L\left(t,q,\square_{\epsilon}q\right)\cdot\xi+(L\left(t,q,\square_{\epsilon}q\right)-\partial_{3}
L\left(t,q,\square_{\epsilon}q\right)\cdot\square_{\epsilon}q)\tau\Big\}dt\\
&\qquad +R(\epsilon) +R'(\epsilon)\, ,
\end{split}
\end{equation}
where $R(\epsilon)$ and $R'(\epsilon)$ are integrals with terms
resulting from the application of formula \eqref{eq:mult} of
Theorem~\ref{theo:mult}. Taking into consideration condition
\eqref{eq:ctnnd} and Lemma~\ref{lem:lema}, the integrals
$R(\epsilon)$ and $R'(\epsilon)$ vanish. Thus, \eqref{eq:dtnnd}
simplify to
\begin{equation}
\label{eq:dtnnd1}
\int_{t_{a}}^{t_{b}}\frac{\square_{\epsilon}}{\square
t}\Big\{\partial_{3}
L\left(t,q,\square_{\epsilon}q\right)\cdot\xi+(L\left(t,q,\square_{\epsilon}q\right)-\partial_{3}
L\left(t,q,\square_{\epsilon}q\right)\cdot\square_{\epsilon}q)\tau\Big\}dt
=0 \, .
\end{equation}
Using formula \eqref{eq:int} and having in mind that
\eqref{eq:dtnnd1} holds for an arbitrary $[{t_{a}},{t_{b}}]
\subseteq [a-\epsilon,b+\epsilon]$, we conclude that
\begin{equation*}
L\left(t,q,\square_{\epsilon}q\right)\cdot\xi+\Bigl(L\left(t,q,\square_{\epsilon}q\right)-\partial_{3}
L\left(t,q,\square_{\epsilon}q\right)\cdot\square_{\epsilon}q\Bigr)\tau
=constant \, .
\end{equation*}
\end{proof}


\subsection{Scale optimal control}

Theorem~\ref{theo:tnnd} gives a Lagrangian formulation
of Noether's principle to the non-differentiable scale setting.
Now we give a scale Hamiltonian formulation of Noether's principle
for more general scale problems of optimal control (Theorem~\ref{thm:NT:OC}).
The result is obtained as a corollary of Theorem~\ref{theo:tnnd}.

We define the scale optimal control problem as follows:
\begin{gather}
\label{Pond} I[q(\cdot),u(\cdot)] = \int_{a-\epsilon}^{b+\epsilon}
L\left(t,q(t),u(t)\right) dt \longrightarrow \min \, , \\
\square_{\epsilon}{q}(t)=\varphi\left(t,q(t),u(t)\right) \, ,
\notag
\end{gather}
under the given initial condition $q(a-\epsilon)=q_{a_\epsilon}$,
$0<\epsilon\ll 1$, $q(\cdot)$, $u(\cdot) \in H^{\alpha}$, $0<\alpha<1$. The
Lagrangian $L :[a-\epsilon,b+\epsilon] \times \mathbb{R}^{n}\times
\mathbb{C}^{m} \rightarrow \mathbb{C}$ and the velocity vector
$\varphi : [a-\epsilon,b+\epsilon] \times \mathbb{R}^{n}\times
\mathbb{C}^m\rightarrow \mathbb{C}^{n}$ are assumed to be
$C^{1}$-functions with respect to all its arguments. Similarly as
before, we assume that
\begin{equation*}
||DL(t,q(t),u(t))|| \leq K \, ,
\end{equation*}
where $K$ is a non-negative constant, $D$ denotes the differential
and $||\cdot||$ a classical norm of matrices.

\begin{remark}
In the particular case when $\varphi(t,q,u) = u$, \eqref{Pond} is
reduced to the scale problem of the calculus of variations
\eqref{Pe}.
\end{remark}

\begin{remark}
If functions $q$ are differentiable, problem \eqref{Pond} tends to
the classical problem of optimal control,
\begin{gather}
\label{eq:JO1}
I[q(\cdot),u(\cdot)] =\int_a^b L\left(t,q(t),u(t)\right) dt \longrightarrow \min \, ,\\
\dot{q}(t)=\varphi\left(t,q(t),u(t)\right) \, , \notag
\end{gather}
as $\epsilon \rightarrow 0$.
\end{remark}

\begin{theorem}
\label{theo:pmpnd} If $(q(\cdot),u(\cdot))$ is a minimizer of
\eqref{Pond}, then there exists a co-vector function $p(t)\in
H^{\alpha}([a-\epsilon,b+\epsilon];\mathbb{R}^{n})$ such that the
following conditions hold:
\begin{itemize}
\item the scale Hamiltonian system
\begin{equation}
\label{eq:Hamnd}
\begin{cases}
\square_{\epsilon}{q}(t)&=\partial_{4} {\cal H}(t, q(t), u(t),p(t)) \, , \\
\square_{\epsilon}{p}(t) &=-\partial_{2} {\cal H}(t, q(t), u(t),
p(t)) \, ;
\end{cases}
\end{equation}
\item the stationary condition
\begin{equation}
\label{eq:CE}
 \partial_{3} {\cal H}(t, q(t), u(t), p(t))=0 \, ;
\end{equation}
\end{itemize}
where the Hamiltonian ${\cal H}$ is defined by
\begin{equation}
\label{eq:Hnd} {\cal H}\left(t,q,u,p\right) =L\left(t,q,u\right)+p
\cdot \varphi\left(t,q,u\right) \, .
\end{equation}
\end{theorem}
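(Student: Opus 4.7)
I would derive Theorem~\ref{theo:pmpnd} by the classical Lagrange-multiplier reformulation, applied in the scale calculus of variations setting of Theorem~\ref{Thm:NonDtELeq}. Introduce the augmented Lagrangian
\[
\tilde L(t,q,u,p,\square_{\epsilon}q) := \mathcal{H}(t,q,u,p) - p\cdot\square_{\epsilon}q
= L(t,q,u) + p\cdot\bigl(\varphi(t,q,u) - \square_{\epsilon}q\bigr).
\]
Along feasible trajectories the bracketed factor vanishes, so $\tilde L$ agrees with $L$ on the admissible set and the value of the cost is unchanged. We therefore recast \eqref{Pond} as the \emph{unconstrained} scale variational problem of minimising $\int_{a-\epsilon}^{b+\epsilon}\tilde L\,dt$ over the enlarged state $(q,u,p)$, to which Theorem~\ref{Thm:NonDtELeq} applies componentwise.

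Applying the scale Euler--Lagrange equation \eqref{eq:elnd} to each component of $(q,u,p)$ produces the three necessary conditions in one stroke. Because $\tilde L$ is independent of $\square_{\epsilon}u$ and $\square_{\epsilon}p$, the $u$- and $p$-equations degenerate to the algebraic relations
\[
\partial_u\tilde L = \partial_3\mathcal{H}(t,q,u,p) = 0,\qquad \partial_p\tilde L = \varphi(t,q,u) - \square_{\epsilon}q = 0,
\]
which give precisely the stationary condition~\eqref{eq:CE} and, via $\partial_4\mathcal{H} = \varphi$, the first component $\square_{\epsilon}q = \partial_4\mathcal{H}$ of the Hamiltonian system~\eqref{eq:Hamnd}. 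The $q$-equation is the genuine scale Euler--Lagrange equation
\[
\partial_q\tilde L - \square_{\epsilon}\bigl(\partial_{\square_{\epsilon}q}\tilde L\bigr) = 0,
\]
and since $\partial_q\tilde L = \partial_2\mathcal{H}$ while $\partial_{\square_{\epsilon}q}\tilde L = -p$, this rewrites as $\square_{\epsilon}p = -\partial_2\mathcal{H}$, completing~\eqref{eq:Hamnd}.

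The main obstacle is the regularity bookkeeping for the multiplier $p$. For the componentwise application of Theorem~\ref{Thm:NonDtELeq} to be legitimate, the adjoint must lie in $H^{\alpha}([a-\epsilon,b+\epsilon];\mathbb{R}^{n})$, so that the product $p\cdot\square_{\epsilon}q$ appearing in $\tilde L$ is compatible with the boundedness hypothesis of the form \eqref{eq:ctnnd} and with the hypotheses of Theorem~\ref{theo:mult} and Lemma~\ref{lem:lema} invoked in the derivation of the scale Euler--Lagrange equations. As in the classical proof of Pontryagin's principle, one exhibits such a $p$ by solving the scale adjoint equation $\square_{\epsilon}p = -\partial_2\mathcal{H}(t,q(t),u(t),p(t))$ backwards from a terminal value chosen so that the free boundary term at $t = b+\epsilon$ is absorbed; the $C^{1}$ hypothesis on $L$ and $\varphi$ together with a scale analogue of Gronwall's estimate propagates the H\"older regularity of $(q,u)$ to $p$. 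Once this regularity is secured, the reading of \eqref{eq:Hamnd} and \eqref{eq:CE} off the three Euler--Lagrange equations of $\tilde L$ is purely algebraic.
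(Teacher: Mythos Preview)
Your proposal is correct and follows essentially the same route as the paper: form the augmented functional with integrand $\mathcal{H}(t,q,u,p)-p\cdot\square_{\epsilon}q$ via the Lagrange multiplier rule, then apply the scale Euler--Lagrange equation \eqref{eq:elnd} componentwise in $(q,u,p)$ to read off \eqref{eq:Hamnd} and \eqref{eq:CE}. The paper's proof is in fact terser than yours---it simply asserts the multiplier reformulation and writes out the three Euler--Lagrange equations---and does not attempt the regularity bookkeeping for $p$ that you sketch in your final paragraph; that discussion is a reasonable addition but goes beyond what the paper provides.
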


\begin{remark}
The first equation in the scale Hamiltonian system
\eqref{eq:Hamnd} is nothing more than the scale control system
$\square_{\epsilon}{q}(t)=\varphi\left(t,q(t),u(t)\right)$ given
in the formulation of problem \eqref{Pond}.
\end{remark}

\begin{remark}
In classical mechanics $p$ is called the \emph{generalized
momentum}. In the language of optimal control
\cite{CD:MR29:3316b}, $p$ is known as the adjoint variable.
\end{remark}

\begin{definition}
\label{scale:Pont:Ext} A triplet $(q(\cdot),u(\cdot),p(\cdot))$
satisfying the conditions of Theorem~\ref{theo:pmpnd} will be
called a \emph{scale Pontryagin extremal}.
\end{definition}

\begin{remark}
\label{rem:cp:CV:EP:EEL} In the particular case when
$\varphi(t,q,u) = u$, Theorem~\ref{theo:pmpnd} reduces to
Theorem~\ref{Thm:NonDtELeq}: the stationary condition
\eqref{eq:CE} gives $p = - \partial_3 L$ and the second equation
in the scale Hamiltonian system \eqref{eq:Hamnd} gives
$\square_{\epsilon}{p}(t) =-\partial_{2} L$. Comparing both
equalities, one obtains the scale Euler-Lagrange equation
\eqref{eq:elnd}: $\square_{\epsilon}
\partial_3 L = \partial_{2} L$. In other words, the scale
Pontryagin extremals (Definition~\ref{scale:Pont:Ext}) are a
generalization of the scale Euler-Lagrange extremals
(Definition~\ref{def:scale:ext}).
\end{remark}

\begin{proof} (of Theorem~\ref{theo:pmpnd})
Using the Lagrange multiplier rule, \eqref{Pond} is equivalent to
the augmented problem
\begin{equation}
\label{eq:pcond} J[q(\cdot),u(\cdot),p(\cdot)] = \int_a^b
\left[{\cal
H}\left(t,q(t),u(t),p(t)\right)-p(t)\cdot\square_{\epsilon}q(t)\right]dt
\longrightarrow \min \, .
\end{equation}
The necessary optimality condition \eqref{eq:Hamnd}-\eqref{eq:CE}
is obtained from the Euler-Lagrange equations \eqref{eq:elnd}
applied to problem \eqref{eq:pcond}:
\begin{equation*}
\begin{cases}
\frac{\square_{\epsilon}}{\square t}\frac{\partial }{\partial
\square_{\epsilon}q} \left({\cal
H}-p\cdot\square_{\epsilon}q\right)
=\frac{\partial}{\partial q}\left({\cal H}-p\cdot\square_{\epsilon}q\right)\\
\frac{\square_{\epsilon}}{\square t}\frac{\partial }{\partial
\square_{\epsilon}u} \left({\cal
H}-p\cdot\square_{\epsilon}q\right) =\frac{\partial}{\partial
u}\left({\cal H}-p\cdot\square_{\epsilon}q\right)\;
\Leftrightarrow\;
\begin{cases}
-\square_{\epsilon}p=\partial_{2} {\cal H}\\
0=\partial_{3} {\cal H}\\
0=\partial_{4} {\cal H}-\square_{\epsilon}q\\
\end{cases}\\
 \frac{\square_{\epsilon}}{\square
t}\frac{\partial }{\partial \square_{\epsilon}p} \left({\cal
H}-p\cdot\square_{\epsilon}q\right)
=\frac{\partial}{\partial p}\left({\cal H}-p\cdot\square_{\epsilon}q\right)\\
\end{cases}
\end{equation*}
\end{proof}

The notion of invariance for problem \eqref{Pond} is defined using
the equivalent augmented problem \eqref{eq:pcond}.

\begin{definition}[\textrm{cf.} Definition~\ref{def:invnd}]
\label{def:invnd-co} The functional \eqref{eq:pcond} is said to be
invariant under the $s$-parameter group of infinitesimal
transformations
\begin{equation}
\label{eq:tinfnd}
\begin{cases}
\bar{t} = t + s\tau(t,q,u,p) + o(s) \, ,\\
\bar{q}(t) = q(t) + s\xi(t,q,u,p) + o(s) \, ,\\
\bar{u}(t) = u(t) + s\varrho(t,q,u,p) + o(s) \, ,\\
\bar{p}(t) = p(t) +s\varsigma(t,q,u,p) + o(s) \, ,\\
\end{cases}
\end{equation}
$\tau$, $\xi$, $\varrho$, $\varsigma \in H^{\beta}$, $\beta\geq
\alpha 1_{[1/2,1]}+(1-\alpha)1_{]0,1/2[}$, if
\begin{multline}
\label{eq:invnd-co} \int_{t_{a}}^{t_{b}}\left[{\cal
H}\left(t,q(t),u(t),p(t)\right)-p(t)\cdot\square_{\epsilon}q(t)\right]dt\\
=\int_{\bar{t}(t_a)}^{\bar{t}(t_b)} \left[{\cal
H}\left(\bar{t},\bar{q}(\bar{t}),\bar{u}(\bar{t}),\bar{p}(\bar{t})\right)
-\bar{p}(\bar{t})\cdot
\square_{\epsilon}\bar{q}(\bar{t})\right]dt
\end{multline}
for any subinterval  $[{t_{a}},{t_{b}}] \subseteq
[a-\epsilon,b+\epsilon]$.
\end{definition}

\begin{definition}[\textrm{cf.} Definition~\ref{def:leicond}]
A function $C(t,q(t),u(t),p(t))$ preserved along any scale
Pontryagin extremal $(q(\cdot),u(\cdot),p(\cdot))$ of problem
\eqref{Pond} is said to be a \emph{scale constant of motion} for
\eqref{Pond}.
\end{definition}

Theorem~\ref{thm:NT:OC} gives a Noether-type theorem for scale
optimal control problems \eqref{Pond}.

\begin{theorem}[Scale Noether's theorem in Hamiltonian form]
\label{thm:NT:OC} If we have invariance in the sense of
Definition~\ref{def:invnd-co}, then
\begin{equation}
\label{eq:tnnd-co} C(t,q(t),u(t),p(t)) ={\cal
H}(t,q(t),u(t),p(t))\tau-p(t)\cdot\xi
\end{equation}
is a scale constant of motion for \eqref{Pond}.
\end{theorem}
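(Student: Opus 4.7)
The plan is to deduce Theorem~\ref{thm:NT:OC} from the Lagrangian scale Noether theorem (Theorem~\ref{theo:tnnd}) applied to the augmented variational problem \eqref{eq:pcond}, following the perspective of Remark~\ref{rem:cp:CV:EP:EEL}. The Lagrange-multiplier reformulation used in the proof of Theorem~\ref{theo:pmpnd} already identifies scale Pontryagin extremals of \eqref{Pond} with scale Euler-Lagrange extremals of \eqref{eq:pcond}, so once the invariance of \eqref{eq:pcond} is interpreted at the augmented level, \eqref{eq:tnnd} should produce \eqref{eq:tnnd-co} after a short algebraic reduction.

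Concretely, I would regard $\tilde q = (q,u,p)$ as the dependent variable of a scale calculus of variations problem on $[a-\epsilon, b+\epsilon]$ with extended Lagrangian
\[
\tilde L\bigl(t, \tilde q, \square_\epsilon \tilde q\bigr) = {\cal H}(t,q,u,p) - p \cdot \square_\epsilon q,
\]
and with generator $(\xi,\varrho,\varsigma)$ in place of $\xi$ of Definition~\ref{def:invnd}, keeping the same $\tau$. With this dictionary, Definition~\ref{def:invnd-co} becomes exactly the statement that $\tilde L$ is invariant in the sense of Definition~\ref{def:invnd}, so Theorem~\ref{theo:tnnd} applies and yields the scale constant of motion
\[
\tilde C = \partial_3 \tilde L \cdot (\xi,\varrho,\varsigma) + \bigl(\tilde L - \partial_3 \tilde L \cdot \square_\epsilon \tilde q\bigr)\tau .
\]
Since $\tilde L$ depends on $\square_\epsilon \tilde q$ only through the linear term $-p\cdot\square_\epsilon q$, the partial $\partial_{\square_\epsilon q}\tilde L$ equals $-p$ while the $u$- and $p$-components vanish. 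A direct substitution gives $\partial_3 \tilde L \cdot (\xi,\varrho,\varsigma) = -p\cdot\xi$ and $\tilde L - \partial_3 \tilde L \cdot \square_\epsilon \tilde q = {\cal H}$, so $\tilde C = {\cal H}\tau - p\cdot\xi$, which is precisely \eqref{eq:tnnd-co}.

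The only technical step requiring care is the verification that Theorem~\ref{theo:tnnd} genuinely applies to $\tilde L$: the differential-norm bound \eqref{eq:ctnnd} has to be inherited by ${\cal H} = L + p\cdot\varphi$ on $[a-\epsilon, b+\epsilon]$ so that Lemma~\ref{lem:lema} kills the $R(\epsilon)$-type remainders appearing in \eqref{eq:dtnnd}. Since $L$ and $\varphi$ are $C^1$ with bounded differential, and $p\in H^\alpha$ lives on a compact interval, this transfer is routine; the H\"{o}lder hypothesis $\beta\geq \alpha 1_{[1/2,1]}+(1-\alpha)1_{]0,1/2[}$ on $(\tau,\xi,\varrho,\varsigma)$ imposed in Definition~\ref{def:invnd-co} is precisely what Lemma~\ref{lem:lema} needs in the augmented setting. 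All other ingredients---the scale Euler-Lagrange equations \eqref{eq:elnd}, the scale DuBois-Reymond identity \eqref{eq:cdrnd}, and the necessary and sufficient invariance condition \eqref{eq:cnsind}---are already available and can be invoked verbatim for $\tilde L$. I anticipate no harder obstacle than this bookkeeping verification.
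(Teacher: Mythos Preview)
Your proposal is correct and follows exactly the paper's approach: the paper's proof consists of the single sentence ``The scale constant of motion \eqref{eq:tnnd-co} is obtained by applying Theorem~\ref{theo:tnnd} to problem \eqref{eq:pcond},'' and your argument is precisely the detailed unpacking of that sentence, including the identification $\partial_3\tilde L=(-p,0,0)$ and $\tilde L-\partial_3\tilde L\cdot\square_\epsilon\tilde q={\cal H}$. The additional bookkeeping you flag (transferring the bound \eqref{eq:ctnnd} to ${\cal H}$ and checking the H\"older hypotheses on the generators) is not spelled out in the paper but is indeed the only nontrivial verification.
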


\begin{proof}
The scale constant of motion \eqref{eq:tnnd-co} is obtained by
applying Theorem~\ref{theo:tnnd} to problem \eqref{eq:pcond}.
\end{proof}

\begin{remark}
For the scale problem of the calculus of variations \eqref{Pe} the
Hamiltonian \eqref{eq:Hnd} takes the form ${\cal H} = L + p \cdot
u$, with $u = \square_{\epsilon}{q}(t)$ and $p = - \partial_3 L$
(\textrm{cf.} Remark~\ref{rem:cp:CV:EP:EEL}). In this case the
scale constant of motion \eqref{eq:tnnd-co} reduces to
\eqref{eq:tnnd}.
\end{remark}


\section{Application: Scale Constants of Motion for Schr\"{o}dinger Equations}

In \cite[\S 5]{CD:Cresson:2005} some fractional variants of the
Schr\"{o}dinger equation, with particular interest in quantum
mechanics, are studied. It is proved that under certain
conditions, solutions of both linear and nonlinear
Schr\"{o}dinger's equations coincide with the extremals of certain
functionals \eqref{Pe} of the scale calculus of variations. In
this section we use our Noether theorem to find scale constants of
motion for the problems studied in \cite[\S 5]{CD:Cresson:2005}.
In all the examples we use the computer program
\cite{GouveiaTorresCMAM,GouveiaTorresRocha06}\footnote{The
software is available from the Maple Application Centre at
[\url{http://www.maplesoft.com/applications/app_center_view.aspx?AID=1983}].}
to compute the symmetries (\textrm{i.e.} the invariance
transformations \eqref{eq:tinf:sd}).

\begin{example}
Let us consider the following nonlinear Schr\"{o}dinger equation:
 \begin{equation}
 \label{eq:sch}
 2i\gamma m\left[-\frac{1}{\Psi}\left(\frac{\partial \Psi}{\partial q}\right)^2
 \left(i\gamma+\frac{a_\epsilon(t)}{2}\right)+\frac{\partial \Psi}{\partial t}
 +\frac{a_\epsilon(t)}{2}\frac{\partial^2 \Psi}{\partial
 q^2}\right]
 =\left(U(q)+\alpha(q)\right)\Psi
\end{equation}
where $m>0$, $\gamma\in\mathbb{R}$,
$U:\mathbb{R}\longmapsto\mathbb{R}$, $\alpha(q)$ is an arbitrary
continuous function, $q(t)\in H^{1/2}$, $\Psi(t,q)$ satisfy the
condition
\begin{equation*}
\frac{\square_\epsilon q(t)}{\square t}=-i2\gamma\frac{\partial
\ln(\Psi(t,q))}{\partial q}
\end{equation*}
and $a_\epsilon:\mathbb{R}\longmapsto\mathbb{C}$ is given by
\begin{equation*}
a_\epsilon(t)=\frac{1}{2}\left[\left(\left(\Delta_\epsilon^+q(t)\right)^2-
\left(\Delta_\epsilon^-q(t)\right)^2\right)-i\left(\left(\Delta_\epsilon^+q(t)\right)^2+
\left(\Delta_\epsilon^-q(t)\right)^2\right)\right]\, .
\end{equation*}
It is shown in \cite[Theorem~5.1]{CD:Cresson:2005} that the
solutions $q(t)$ of \eqref{eq:sch} coincide with the
Euler-Lagrange extremals of functional \eqref{Pe} with the
Lagrangian
 \begin{equation*}
  L(t,q(t),\square_\epsilon q(t))=\frac{1}{2}m\left(\square_\epsilon
 q(t)\right)^2+U(q)\, .
 \end{equation*}
The functional
\begin{equation*}
I[q(\cdot)] =\frac{1}{2}\int_a^b
\left[m\left(-i2\gamma\frac{\partial \ln(\Psi(t,q))}{\partial
q}\right)^2 +2U(q)\right]dt
\end{equation*}
is invariant in the sense of Definition~\ref{def:invnd} under the
symmetries $(\tau,\xi)=(c,0)$, where $c$ is an arbitrary constant.
It follows from our Theorem~\ref{theo:tnnd} that
\begin{equation}
\label{eq:scm:Ex1} -2m\left(\gamma\frac{\partial
\ln(\Psi(t,q))}{\partial q}\right)^2+U(q)
\end{equation}
is a scale constant of motion: \eqref{eq:scm:Ex1} is preserved
along all the solutions $q(t)$ of the nonlinear Schr\"{o}dinger
equation \eqref{eq:sch}.
\end{example}

\begin{example}
We now consider the following linear Schr\"{o}dinger's equation:
\begin{equation}
\label{eq:sch1} i\bar{h}\frac{\partial \Psi}{\partial
t}+\frac{\bar{h}^2}{2m}\frac{\partial^2 \Psi}{\partial
q^2}=U(q)\Psi
\end{equation}
where $\bar{h}=\frac{h}{2\pi}$, $m>0$,
$U:\mathbb{R}\longmapsto\mathbb{R}$, $\Psi(t,q)$ satisfy
\begin{equation*}
\frac{\square_\epsilon q(t)}{\square
t}=-i\frac{\bar{h}}{m}\frac{\partial \ln(\Psi(t,q))}{\partial q}
\end{equation*}
and $q(t)\in H^{1/2}$ are such that
\begin{equation*}
\frac{1}{2}\left[\left(\left(\Delta_\epsilon^+q(t)\right)^2-
\left(\Delta_\epsilon^-q(t)\right)^2\right)-i\left(\left(\Delta_\epsilon^+q(t)\right)^2+
\left(\Delta_\epsilon^-q(t)\right)^2\right)\right]=-i\frac{\bar{h}}{m}
\, .
\end{equation*}
In \cite[Theorem 5.2]{CD:Cresson:2005} it is proved that the
solutions of \eqref{eq:sch1} coincide with Euler-Lagrange
extremals of functional \eqref{Pe} with Lagrangian
 \begin{equation*}
  L(t,q(t),\square_\epsilon q(t))=\frac{1}{2}m\left(\square_\epsilon
 q(t)\right)^2+U(q)\, .
 \end{equation*}
It happens that functional
\begin{gather*}
I[q(\cdot)] =\frac{1}{2}\int_a^b
\left[m\left(-i\frac{\bar{h}}{m}\frac{\partial
\ln(\Psi(t,q))}{\partial q}\right)^2 +2U(q)\right]dt
\end{gather*}
is invariant in the sense of Definition~\ref{def:invnd} under the
symmetries $(\tau,\xi)=(c,0)$, where $c$ is an arbitrary constant.
It follows from our Theorem~\ref{theo:tnnd} that
\begin{equation}
\label{eq:scm:Ex2} -\frac{1}{2m}\left(\bar{h}\frac{\partial
\ln(\Psi(t,q))}{\partial
q}\right)^2+U(q)=-\frac{1}{8m}\left(\frac{h}{\pi}\frac{\partial
\ln(\Psi(t,q))}{\partial q}\right)^2+U(q)
\end{equation}
is a scale constant of motion: expression \eqref{eq:scm:Ex2} is
constant along all the solutions $q(t)$ of the linear
Schr\"{o}dinger's equation \eqref{eq:sch1}.
\end{example}


\section*{Acknowledgements}

Delfim F. M. Torres is grateful to the University of Pau, Academy
of Bordeaux, for a one-month position of Invited Professor at the
Laboratoire de Math\'{e}matiques Appliqu\'{e}es, Universit\'{e} de
Pau et des Pays de l'Adour, France, where the present work was
finished. The hospitality and the good working conditions at Pau
are very acknowledged.



\end{document}